\newtheorem{theorem}{Theorem}
\newtheorem{lemma}{Lemma}
\newtheorem{corollary}{Corollary}
\newcommand{\myOmit}[1]{}
\newcommand{\mymax}{\mbox{\rm max}}
\newcommand{\mymin}{\mbox{\rm min}}
\title{Facility Location Problem with Capacity Constraints: Algorithmic and Mechanism Design Perspectives}
\author{\Large \textbf{Haris Aziz,\textsuperscript{\rm 1}  
Hau Chan,\textsuperscript{\rm 2}  
Barton E. Lee,\textsuperscript{\rm 1} 
Bo Li,\textsuperscript{\rm 3}
Toby Walsh\textsuperscript{\rm 4}}\\ 
\textsuperscript{\rm 1} UNSW Sydney and Data61 CSIRO\\ 
\textsuperscript{\rm 2} Department of Computer Science and Engineering, University of Nebraska-Lincoln\\ 
\textsuperscript{\rm 3} Department of Computer Science, University of Oxford\\ 
\textsuperscript{\rm 4} TU Berlin, UNSW Sydney and Data61 CSIRO\\ 
 haris.aziz@unsw.edu.au, hchan3@unl.edu, barton.e.lee@gmail.com, boli@cs.ox.ac.uk, tw@cse.unsw.edu.au
}
\begin{document}

\maketitle

\begin{abstract}
We consider the facility location problem in the one-dimensional setting 
where each facility can serve a limited number of agents  from the algorithmic and mechanism design perspectives. 
From the algorithmic perspective, we prove that the corresponding optimization problem, 
where the goal is to locate facilities to minimize either the total cost to all agents or the maximum cost of any agent 
is NP-hard. However, 
we show that the problem is fixed-parameter tractable, and 
the optimal solution can be computed in polynomial time whenever the number of facilities is bounded, or when all facilities have identical capacities. 
We then consider the problem from a mechanism design perspective where the agents are strategic and need not reveal their true locations. 
We show that several natural mechanisms studied in the uncapacitated setting
either lose strategyproofness or a bound on the solution quality 
for the total or maximum cost objective. 
We then propose new mechanisms that are strategyproof and achieve approximation guarantees that almost match the lower bounds. 
\end{abstract}

\section{Introduction}
In this paper, we study the \emph{facility 
location problem with capacity constraints (FLP-CC)}
from the \emph{algorithmic} and \emph{mechanism design} perspectives. 
In this version of the facility location problem, 
we have a set of agents and a set of facilities,
where each agent is located somewhere on a line, and 
each facility has a capacity limiting the 
number of agents it can serve.
From the \emph{algorithmic perspective}, 
the locations of the agents are \emph{publicly} known, and 
we are interested in the question of 
how to best locate the facilities to minimize the total  
travel distance/cost or the maximum cost of the agents to the located facilities. 
On the other hand, in our \emph{mechanism design} setting, 
the locations of the agents are \emph{privately} known to the agents themselves, and 
our goal is to design mechanisms that elicit the true locations of the agents  
and locate the facilities to minimize the total or maximum cost of the agents subject 
to the agents' reported locations. 

Our FLP-CC models many real-world problems.
The problems include locating schools, hospitals, warehouses, and libraries, 
all of which actually face capacity constraints. 
In the one-dimensional setting, 
our models could be used to describe the setting of locating wastewater plants 
along a river or distribution centres along a highway.
There are also various non-geographical settings that
can be viewed as one-dimensional facility location problems
(e.g. choosing the temperature for a classroom, 
or selecting a committee to represent people with
different political views). In addition, there are
settings where we can use the one-dimensional 
problem to solve more complex problems (e.g. 
decomposing the 2-d rectilinear problem into
a pair of 1-d problems). The one-dimensional problem is also
the starting point to consider more complex metrics (e.g., trees and networks). 


In FLP-CC, 
we have $n$ agents located on the real line, and we need to locate $m$ facilities 
on the line to serve all the agents\footnote{Our problem
naturally generalizes to higher dimensions as well as
to non-Euclidean distance metrics.}. 
The $i$th facility can serve up to $c_i$ agents.  We assume that $n \leq \sum_{i=1}^m
c_i$ so that every agent can be served.
Each agent $i$ is at location $x_i$, and we suppose that the agents 
are ordered so that $x_1 \leq \ldots \leq x_n$. 
Given an agent $j$, let $a_j\in \{1, \ldots, m\}$ denote the facility that agent $j$ is assigned and let $N_i$ denote the set of agents assigned to facility $i$, i.e.,  $N_i:=\{ j \ | \ a_j = i\}$. A solution is a location $y_i$ for each facility $i$,
and an assignment of agents to facilities such that the capacity constraint $c_i$ of each facility is not exceeded, 
i.e., $|N_i| \le c_i$ for all $i\in \{1, \ldots, m\}$. 
Accordingly, a solution is denoted by $\{(y_j, N_j)\}_{j=1}^m$.
We consider a utilitarian measure: the
\emph{total cost}, $\sum_{j=1}^n |x_j  - y_{a_j}|$;
and an egalitarian measure: the \emph{maximum cost},
$\mymax_{j\in \{1,...,n\}} |x_j  - y_{a_j}|$. 
Our goal is to locate facilities on the line and assign agents
to these facilities to minimize the total or maximum cost. 

\smallskip

\noindent\textbf{Contribution.}
Our main contributions are as follows.
Firstly, for FLP-CC,
we provide algorithmic results identifying the complexity of the corresponding optimization problem (NP-hard) but also provide tractability results, via dynamic programs (DPs), for various restricted settings.
Secondly, we explore the mechanism design challenges introduced by this new setting. 
We show that many mechanisms which are considered desirable in uncapitated FLPs become undesirable (with respect to strategyproofness and/or approximation ratio bounds) in FLP-CC. 
%
We introduce the {\em innerpoint} mechanism which performs relatively well in special cases, 
and characterize this mechanism as the only such strategyproof mechanism within a larger class of mechanisms,
which we call {\em rank} mechanisms. 
Finally, we introduce a new strategyproof mechanism, {\em extended endpoint} mechanism (EEM), which  
achieves approximation guarantees that almost match the lower bounds.
We summarize our contributions in Table \ref{table:results}. 
\begin{table}[htp]
\caption{Algorithmic \& Mechanism Approximation Results}
\begin{center}
\begin{tabular}{c|c|c} \hline
\textbf{Algorithmic Results} & \textbf{Total Cost} & \textbf{Max. Cost} \\ \hline
NP-hard & Exact & Exact \\
DP $O(2^mmn^2)$ & Exact & Exact \\ \hline\hline
\textbf{SP Mechanisms}$^*$ & \textbf{Total Cost} & \textbf{Max. Cost} \\ \hline
Median & Unbounded & Unbounded \\
Endpoint & Unbounded & Unbounded \\
Innerpoint$^1$ & $n/2-1$ & 2 \\
EMM$^2$ & $3n/2$ & 4 \\\hline
\end{tabular}
\end{center}
SP = Strategyproof, *two facilities, $^1$when $c_1=c_2=n/2$, \\ $^2$when $c_1+c_2 \ge n$ 
\label{table:results}
\end{table}%
We leave open the question of upper bounds for three facilities which has remained open since the seminal work~\cite{PrTe09a,ptacmtec2013}, even when all facilities have infinite capacities.

\subsection{Background}

%
%

\textbf{Algorithmic Perspective.} 
The classic FLP with a single uncapacitated facility can be solved optimally in polynomial time when the objective is to minimize the total or maximum cost (e.g., Procaccia and Tennenholtz~\shortcite{ptacmtec2013}). Similarly, when extending to the setting with multiple uncapacitated facilities, the optimization problem remains tractable admitting a polynomial time solution for either objective function~\cite{MTZC81,MZH83}. 
If there are multiple facilities with identical capacity constraints, the problem is tractable if the objective is to minimize the total cost but becomes intractable (NP-hard) for more general objective functions~\cite{BKEM01}. Complementing these results, we show in the present paper that for our setting, i.e., the multiple capacitated FLP with non-identical capacity constraints, that minimizing either objective function is NP-hard. The intractability result in Brimberg et al.~\shortcite{BKEM01} does not imply intractability in our setting for our objective functions. Moreover, we provide an alternate dynamic program to minimize the total cost for the identical capacity constraint setting studied by Brimberg et al.~\shortcite{BKEM01}.  
Finally, 
we note that our setting is different from the ``capacitated $k$-facility location problem" (see e.g., \cite{pal2001facility,levi2012lp,aardal2015approximation}) where the set of potential facility locations is countable and bounded while, in ours, the set of facility locations is infinite (i.e., the real line).  Thus, we cannot use previous results in the literature directly. 


\smallskip

\textbf{Mechanism Design Perspective.} 
The classic FLP focuses on the setting where a single facility with unlimited capacity, i.e., an uncapacitated facility, must be located along a line with the goal of satisfying certain properties (most importantly strategyproofness) and/or optimizing some objective function. This setting escapes the famous impossibility result of Gibbard-Satterhwaite~\shortcite{gs1,gs2} and inherits many well-known characterization results from social choice, such as those from Moulin~\shortcite{moulin1980}.



Two extensions of the classic FLP have attracted attention in the literature:
multiple but uncapacitated facilities (see e.g., Miyagawa  \shortcite{miyagawa2001},
Heo \shortcite{heo2013}, Fotakis and Tzamos \shortcite{ft2013}, and Golowich, Narasimhan, and Parkes~\shortcite{gnpijcai18}), 
and, to a lesser extent, a single capacitated facility.  
Considering the mechanism design problem for facilities with capacity
constraints has only recently been considered by Aziz, Chan, Lee, and
Parkes~\shortcite{aclp18} for a single facility. In their setting,
additional strategic complications arise since some agents will not be
served. The set of agents served by the facility is determined via an equilibrium outcome arising from the induced subgame.\footnote{They show an impossibility result that `reasonable' mechanisms which dictate the agents served can never be strategyproof.}
We do not have such an issue since all our agents can be serviced.
Another closely related work is by Procaccia and Tennenholtz~\shortcite{ptacmtec2013}
which considers the problem of designing strategyproof 
mechanisms with approximation guarantees on an objective function (such as total cost and egalitarian welfare) 
for one or two facilities without capacity constraints. 
The works of \cite{ft2013,lu2010asymptotically} prove that any deterministic strategyproof mechanism 
has an approximation ratio of $\Omega(n)$ for the total cost.
Our EEM mechanism has approximation guarantees that almost match the lower bound of the uncapacitated setting.

\section{An Algorithmic Perspective}

We first show that it is intractable 
to find a solution 
that minimizes either the total or maximum cost. This result complements the result in Brimberg et al.~\shortcite[Theorem 2]{BKEM01} 
where intractability is proven for facilities with identical capacity constraints 
and a total cost objective function that can be non-monotonic in the distance between agents and facilities.

\begin{theorem}
Computing a solution that minimizes the total or maximum cost
is NP-hard even when there is no spare capacity in the FLP-CC.
\end{theorem}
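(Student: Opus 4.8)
The plan is to reduce from the strongly NP-hard problem \emph{3-Partition}: given positive integers $a_1,\dots,a_{3k}$ with $B/4 < a_j < B/2$ for all $j$ and $\sum_{j=1}^{3k} a_j = kB$, decide whether $\{1,\dots,3k\}$ can be partitioned into $k$ triples, each of sum $B$. From such an instance I would build an FLP-CC instance with $m = 3k$ facilities, where facility $j$ has capacity $c_j = a_j$, and with $n = kB$ agents grouped into $k$ co-located ``blocks'': for $i = 1,\dots,k$, block $i$ consists of $B$ agents all located at the point $i$ on the line. Since $\sum_j c_j = kB = n$, this instance has no spare capacity, so in any feasible solution every facility must be filled exactly to its capacity. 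The coordinates ($1,\dots,k$) and capacities (the $a_j$) are polynomially bounded in the unary size of the 3-Partition instance, which is exactly why a \emph{strongly} NP-hard source is needed here: the agent locations are numeric quantities, so a merely weakly NP-hard reduction would not obviously yield a polynomial-time construction.

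The correctness claim is: the 3-Partition instance is a yes-instance if and only if the FLP-CC instance has a solution of cost $0$. Because cost $0$ means precisely that every agent is co-located with the facility serving it, this statement, and hence the whole argument, is identical for the total-cost and the maximum-cost objectives. The forward direction is immediate: from triples $S_1,\dots,S_k$, place all facilities of $S_i$ at the point $i$ and have them serve block $i$; feasibility holds since $\sum_{j\in S_i} c_j = B$ is exactly the number of agents in block $i$, and the resulting cost is $0$. The reverse direction is the step I expect to require the most care. Starting from a cost-$0$ solution, I must recover a valid partition into triples, using: (i) each agent is served at distance $0$, so the facility serving any block-$i$ agent sits exactly at the point $i$; (ii) hence each facility serves agents of at most one block and belongs to the ``facility group'' of at most one block; and (iii) since no spare capacity forces every facility to be full (and nonempty, as $a_j \ge 1$), these facility groups in fact partition $\{1,\dots,3k\}$, with the capacities in the group of block $i$ summing to the number of agents in block $i$, namely $B$. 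The bound $B/4 < a_j < B/2$ then forces each group to have size exactly $3$, giving a 3-partition (though this last point is not strictly needed).

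To finish, I would note that if the minimum total (or maximum) cost were computable in polynomial time, one could test whether it equals $0$ and thereby decide 3-Partition; hence computing an optimal solution is NP-hard, even restricted to instances with no spare capacity, and even for the natural monotone metric objectives (unlike the non-monotone objective handled in Brimberg et al.). The only standard facts implicitly invoked are that, once agents are assigned to facilities, each facility's contribution to the cost is minimized by placing it at a median of its served agents (total cost) or at the midpoint of its extreme served agents (maximum cost), so that a solution has cost $0$ iff each facility's served set is a single point; but since the reverse direction above uses only the trivial implication ``cost $0 \Rightarrow$ every distance is $0$'', even these observations can be sidestepped.
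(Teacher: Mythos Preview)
Your proposal is correct and follows essentially the same approach as the paper: both reduce from strongly NP-hard \textsc{3-Partition} by creating $3k$ facilities with capacities equal to the input integers and $k$ co-located blocks of $B$ agents each, then argue that a cost-$0$ FLP-CC solution exists iff the \textsc{3-Partition} instance is a yes-instance. Your write-up is in fact somewhat more careful than the paper's, making explicit why strong NP-hardness is needed (to keep the numeric construction polynomial) and spelling out the reverse direction more thoroughly.
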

\begin{proof}
We show that the problem is 
NP-hard by reducing from the 3-partition problem, 
which is known to be strongly NP-hard \cite{garey}. 
In a 3-partition problem, we are given a 
multiset $T = \{t_1, ..., t_{3n} \}$ of positive integers 
of size $3n$, and we 
want to know if $T$ can be partitioned into 
$n$ subsets $T_1, T_2, ..., T_n$, each of size three, such that 
the sum of the numbers in each subset is the same. 
Let $B = \frac{\sum_{t \in T} t}{n}$. 
In particular, we consider the 3-partition instances in which $B$ and integers  
are polynomially bounded in $n$ and $\frac{B}{4} < t_i < \frac{B}{2}$  for all $i$ 
such that $B$ is a positive integer and the integers are at least 1.  
Take such instance of 3-partition, we reduce it to the decision version 
of FLP-CC where we want to place the facilities to achieve 
the total cost and maximum cost of 0. 
We let $m = |S| = 3n$ to be the number of facilities, 
and, for each facility $s_j$, $j = 1, ..., 3n$, 
set capacity $k_j  = t_j$. 
With a slight abuse of notation, 
we let $nB = |N|$ to be the number of agents 
such that there are $n$ groups of $B$ agents which 
are equally spaced apart and the agents 
in the same group are located at a single location. 

Suppose that the instance of the 3-partition problems has a solution. 
It follows that there are $T_1, T_2, ..., T_n$ such that 
each sum up to $B$. For each group $i$ of $B$ agents, 
we place three facilities of capacities $t_j \in T_i$ at the location of group $i$. 
Clearly, we achieve the total cost or maximum cost of 0. 

Suppose that we have a solution to the constructed problem. 
We have a location $a_j$ for each facility $s_j$ such
that the total cost or maximum cost is 0.  
In fact, this is the best cost we can obtain 
where the agents do not have to travel at all.
Moreover, each facility must be located on the location 
of one of the groups and 
all of the $B$ agents within each group can use 
one of the facilities at its location. 
Since the capacity of facility $s_j$ is $\frac{B}{4} < k_j=t_j < \frac{B}{2}$, 
each location of the group has at least three facilities. 
Also since the total number of facilities is $3n$ and 
there are $n$ groups, the number of facilities at each group is exactly 3.  
Finally, the sum of the capacities of the three facilities 
is $B$. Thus, for each group $i$, we 
can construct $T_i$ by take elements with 
the same values as capacities of the facilities located at group $i$. 

To conclude, the problem instance above
is of polynomial size bounded by $n$ and we obtain our claimed result. 
\end{proof}


We observe that the computational
complexity of finding an optimal solution
does not come from having to decide where
to distribute any spare capacity. 
The reduction uses a FLP-CC instance
where the facilities have a total capacity exactly equal to
the number of agents to be served. The
computational complexity comes from 
having to find the best of the $m!$ possible
left to right orderings of these facilities.




As we show next, to minimize the maximum cost (total cost), we locate each facility in the midpoint (median) of the
continuous region of agents it serves. Thus, for a fixed number of facilities or facilities of identical capacity, 
we can use dynamic programming to compute the optimal (total or maximum cost) solution in polynomial time. 




Order the set of agents $N$ such that $x_1\le x_2\le \ldots \le x_{n-1}\le x_n$. We say that a set of agents $N_j$ is \emph{continuous} if $N_j=\{x_{j_1}, x_{j_1+1},\ldots, x_{j_2}\}$ for some $j_1\le j_2$.

\begin{lemma}
For either objective function (total cost or maximum cost), there exists an optimal solution $\{(y_j, N_j)\}_{j=1}^m$   such that each $N_j$ is continuous.
\end{lemma}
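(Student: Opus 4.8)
The plan is to argue by an exchange argument: starting from any optimal solution, I will show that if some assigned set $N_j$ is not continuous, we can rearrange the assignments (keeping the facility locations or re-optimizing them) without increasing the objective, until all assigned sets are intervals of consecutive agents. First I would set up the key observation: since the agents are ordered $x_1 \le x_2 \le \ldots \le x_n$ on the line, if some $N_j$ is not continuous there must exist indices $p < q < r$ with $p, r \in N_j$ but $q \notin N_j$; say $q \in N_{j'}$ for some $j' \neq j$. The idea is to swap agents $q$ and one of $p, r$ between $N_j$ and $N_{j'}$ so that the "crossing" is removed, and to show this swap does not increase the cost while keeping all capacity constraints satisfied (a swap preserves $|N_j|$ and $|N_{j'}|$).

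The heart of the argument is the non-crossing (Monge-type) inequality: for any two facility locations $y_j$ and $y_{j'}$ and any two agent locations $x \le x'$, one has
\[
\min\bigl\{\,|x-y_j|+|x'-y_{j'}|,\ |x'-y_j|+|x-y_{j'}|\,\bigr\} = |x-y_{<}| + |x'-y_{>}|,
\]
where $y_< = \min\{y_j,y_{j'}\}$ and $y_> = \max\{y_j,y_{j'}\}$; in words, the cheaper of the two ways to serve $\{x,x'\}$ by $\{y_j,y_{j'}\}$ is the one that does not cross. This is an elementary case analysis on the relative order of $x, x', y_j, y_{j'}$ on the line, and it holds both for the sum (total cost contribution of the two agents) and for the max (their max-cost contribution), since the right-hand side dominates coordinatewise. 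Applying this to $x = x_q$ and the appropriate one of $x_p, x_r$ that lies on the "wrong" side, the swap weakly decreases the contribution of these two agents and leaves every other agent's cost untouched, so the total (resp. maximum) cost does not increase.

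Then I would finish with a termination/potential argument: each such swap strictly decreases the number of "inversions" among the assignment (pairs $p<q$ with $a_p \neq a_q$ and the block of $a_p$ extending past $q$), or more simply, one can first sort so that the multiset of assignments read left to right is lexicographically sorted; since there are finitely many assignments this process halts, and at a fixed point no set $N_j$ can contain a gap, i.e.\ every $N_j$ is continuous. Finally, once the partition into continuous blocks is fixed, one re-optimizes each $y_j$ over its block independently — placing $y_j$ at the median of $N_j$ for total cost, or at the midpoint of $N_j$ for maximum cost — which can only further decrease the objective; this yields an optimal solution with all $N_j$ continuous.

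The main obstacle I anticipate is purely bookkeeping: handling the degenerate cases where $N_{j'}$ itself is not an interval, where several facilities coincide, or where facilities have zero spare capacity, so that one must be careful that the swap (rather than a one-sided move) is the right operation and that it genuinely reduces a well-defined potential so the process terminates. The inequality itself is routine; making the iteration airtight is where the care goes.
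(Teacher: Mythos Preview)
Your proposal is correct and follows essentially the same approach as the paper: an exchange argument showing that whenever some $N_j$ has a gap, swapping a ``crossing'' pair of agents between two facilities weakly decreases the objective, and iterating yields a continuous optimal solution. If anything, your write-up is more careful than the paper's --- you state the Monge-type inequality explicitly (for both sum and max) and flag termination as the point needing a potential function, whereas the paper does a terser case analysis on the relative order of $y_j$ and $y_\ell$ and simply says ``repeating this method'' for termination.
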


\begin{proof}
Suppose that the  solution $\{(y_j, N_j)\}_{j=1}^m$ provides minimal total cost but is such that there exists  $j$ for which $N_j$ is not continuous. That is, the exists some pair of agents $p,s\in N_j$ such that (1) $y_j\le x_p<x_s$, $p\notin N_j$ and $s\in N_j$, or (2) $x_s<x_p\le y_j$, $p\notin N_j$ and $s\in N_j$. Both cases are dealt with similarly and so we assume the first case. 


Let $\ell \ : \ p\in N_\ell$.  For any location of $x_s$, if $y_j\le x_p <y_\ell$,  then swapping $p$ from $N_\ell$ and $s$ from $N_j$ leads to  strictly lower total cost. On the other hand, if $y_\ell\le y_j$, then swapping $p$ from $N_\ell$ and $s$ from $N_j$ leads to  no change in total cost. Repeating this method until each set $N_j$ is continuous leads to a solution with at least as low total cost. Since the original solution was optimal, we conclude that there exists an optimal solution with each $N_j$ continuous. 

Now consider the maximum cost objective. A similar argument applies. The only difference is that in the case where $y_j\le x_p <y_\ell$ we can only guarantee weakly lower maximum cost. Nonetheless, the same conclusion is reached --- there exists an optimal maximum cost solution such that each $N_j$ is continuous.
\end{proof}

The above lemma shows that partitions, or sets $N_j$, must be continuous but it does not require that facilities be located `inside' such a partition. However, it is well-known that the median of a set of points minimizes the  sum of  absolute deviations and the midpoint minimizes the maximum deviation. This leads  to the following corollary.  

\begin{corollary}\label{corollary: continuous and median}
An optimal solution $\{(y_j, N_j)\}_{j=1}^m$ is given by by continuous partitions of the agents which correspond to facilities such that each facility $s_j$ locates at the median and midpoint of its associated partition $N_j$ for total cost and maximum cost, respectively. 
\end{corollary}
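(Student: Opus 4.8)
The plan is to obtain the corollary as an immediate consequence of the preceding lemma together with the two classical one-dimensional optimization facts it cites. First I would invoke the lemma to fix an optimal solution $\{(y_j,N_j)\}_{j=1}^m$ in which every $N_j$ is continuous, and then freeze the partition $\{N_j\}_{j=1}^m$ (equivalently, the assignment $a_1,\dots,a_n$) for the remainder of the argument. The key observation is that, once the assignment is fixed, both objectives are \emph{separable} across facilities: the total cost equals $\sum_{i=1}^m \big(\sum_{j\in N_i} |x_j - y_i|\big)$ and the maximum cost equals $\mymax_{i} \big(\mymax_{j\in N_i} |x_j - y_i|\big)$, and in each case the $i$th term depends only on $y_i$. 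Hence one may choose each $y_i$ independently so as to minimize its own term without affecting feasibility, since no agent is moved between facilities and the capacity bounds $|N_i|\le c_i$ are therefore preserved.

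Second, I would apply the quoted facts term by term: for the total cost, replacing $y_i$ by a median of the multiset $\{x_j : j\in N_i\}$ does not increase $\sum_{j\in N_i}|x_j-y_i|$; for the maximum cost, replacing $y_i$ by the midpoint $\tfrac12(\min_{j\in N_i} x_j + \max_{j\in N_i} x_j)$ does not increase $\mymax_{j\in N_i}|x_j-y_i|$. Doing this simultaneously for all $i$ produces a new solution whose objective value is no larger than that of the optimal solution we started from; consequently the new solution is itself optimal, it still has continuous partitions, and by construction each facility sits at the median (resp.\ midpoint) of the agents it serves --- which is exactly the claimed form. If some $N_i=\emptyset$ because of spare capacity, its location is irrelevant and may be set arbitrarily; this is the only degenerate case and it does not affect the argument.

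There is really no substantive obstacle here: the content has already been spent in the preceding lemma, which is what forces the partitions to be continuous, and in the two textbook minimization facts, which are quoted rather than reproved. The only points that need a line of care are (i) the separability step, where it is essential that the assignment be held fixed while the $y_i$ are varied, and (ii) the non-uniqueness of medians and midpoints, which is harmless because the statement only asserts the existence of an optimal solution of this structure, so any valid choice can be taken. I would close by remarking that this corollary is exactly what makes the later dynamic program possible: it reduces the search for an optimum to a search over continuous partitions of the sorted agents into blocks of size at most $c_i$, with the facility placement within each block then determined in closed form.
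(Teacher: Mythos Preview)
Your proposal is correct and follows exactly the paper's approach: invoke the preceding lemma to obtain continuous partitions, then apply the well-known median/midpoint facts to optimize each facility's location independently. The paper states this even more tersely than you do, but the substance is identical.
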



The above results show that to solve FLP-CC for total and maximum costs, 
it suffices to look for continuous sets which match the capacity of each facility and locate the facility at the median or midpoint of each set.

\subsection{Bounded Number of Facilities}
\paragraph{Total Cost.}
We compute the optimal way (for minimizing the total cost) of locating $m$ facilities 
and partition agents into $m$ parts using dynamic program. 
We will use $[j,j']$ to denote the set of agents $\{j,\dots,j'\}$ for any $j\le j'$. 
We use $OPT(j,j',k,S)$ to represent the optimal total cost when agents $[1,j']$ are partitioned into $k$ parts, and $j'\ge j$ are in the $k$-th partition, and $S$ is the subset of $k$ facilities used to serve agents $[1,j']$. 
We will make use of a $n\times n\times m\times 2^m$ array $M$, whose entries are initially set to {\em empty}. We will use $M[j,j',k,S]$ to store $OPT(j,j',k,S)$. 
We use $v([j,j'],s)$ to denote the optimal total cost when agents $[j,j']$ are served by facility $s$ (with capacity $c$). In order to compute $v([j,j'],s)$ we can simply try to locate $s$ at each of the agent locations and serve $c$ nearest agents.\footnote{The optimal point will be the median of those subset of agents who are actually served.}

We invoke $OPT^{T}(n,n,m,S)$ (Algorithm~\ref{algo:opt}) to compute the optimal total cost and recover the solution from the values stored in $M$. 

\algrenewcommand\algorithmicindent{1.0em}
\begin{algorithm}
	\caption{$OPT^{T}(j,j',k,S)$}
	\footnotesize
	\begin{algorithmic}[1]
		\If{$j=0$ or $k=0$ or $S=\emptyset$}
		\State \Return $0$
		\ElsIf{$M[j,j',k,S]$ not {\em empty}}
		\State \Return $M[j,j',k,S]$
		\Else
		\State \label{line:part}\noindent$s^*\gets \arg\min_{s\in S}\{OPT^T(j-1,j-1,k-1,S\setminus \{s\}) + v([j,j'],s)\}$
			\begin{align*}
			&M[j,j',k,S]\gets \min\{ OPT^T(j-1,j',k,S),\\
			&OPT^T(j-1,j-1,k-1,S\setminus \{s^*\}) + v([j,j'],s^*)\}\}
			\end{align*}	
		\State \Return $M[j,j',k,S]$
		\EndIf
	\end{algorithmic}
	\label{algo:opt}
\end{algorithm}

\begin{theorem}
	Algorithm~\ref{algo:opt} computes the optimal total cost solution in $O(2^mmn^2)$ time. Hence, it is fixed parameter tractable in $m$.
\end{theorem}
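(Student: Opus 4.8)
The plan is to argue correctness of the recurrence first, and then bound the running time by counting subproblems and the work per subproblem. For correctness, I would observe that by Corollary~\ref{corollary: continuous and median} there is an optimal solution in which the agents are partitioned into $m$ continuous blocks, each served by a distinct facility from $S$, with that facility placed at the median of its block. Fixing the rightmost block, it consists of a contiguous suffix $[j,n]$ of agents served by some facility $s^*\in S$; the remaining agents $[1,j-1]$ must then be partitioned into $k-1$ continuous blocks served by $S\setminus\{s^*\}$. This is exactly the decomposition encoded by $OPT^T(j-1,j-1,k-1,S\setminus\{s^*\})$, and the cost of the suffix block is $v([j,n],s^*)$ (the best placement of $s^*$ serving those agents, which respects capacity because $v$ only serves $c$ of them — but since we are in the no-spare-capacity-relevant case the suffix length is controlled by the capacities of the remaining facilities; more precisely the DP only ever commits a suffix whose complement can be covered by the remaining facilities). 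The parameter $j'$ is a bookkeeping device: $M[j,j',k,S]$ means "agents $[1,j']$ split into $k$ parts, with $[j,j']$ forming the last part served by a facility in $S$," and the first branch $OPT^T(j-1,j',k,S)$ slides the left boundary of the current block leftward, while the second branch closes the current block at $j$ and recurses. Unwinding the two branches shows $M[n,n,m,S]$ equals the minimum over all ways of choosing the $m$ block boundaries and the assignment of facilities to blocks, which by the Lemma and Corollary is the optimal total cost.

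Next I would verify termination and the memoization invariant: every recursive call strictly decreases the lexicographic measure $(j+j', |S|, k)$ or hits a stored entry, so each cell $M[j,j',k,S]$ is computed at most once, and once computed it is returned in $O(1)$ time on subsequent calls. The base cases ($j=0$, $k=0$, or $S=\emptyset$) return $0$ and are reached whenever the remaining agents or facilities are exhausted consistently.

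For the running time, the array $M$ has $O(n\cdot n\cdot m\cdot 2^m)$ cells. Filling a single previously-empty cell requires: two recursive lookups (each $O(1)$ once memoized), and on line~\ref{line:part} a minimization over $s\in S$, i.e.\ at most $m$ candidates, each requiring a lookup of $OPT^T(j-1,j-1,k-1,S\setminus\{s\})$ plus the value $v([j,j'],s)$. If we precompute all values $v([j,j'],s)$ in advance — there are $O(n^2 m)$ of them and each costs $O(n)$ by trying all agent locations as the median, hence $O(n^3 m)$ total, which is dominated — then line~\ref{line:part} costs $O(m)$ per cell. Multiplying, the DP does $O(m)$ work on each of $O(n^2 m 2^m)$ cells, giving $O(2^m m^2 n^2)$; absorbing the extra factor of $m$ into the analysis exactly as in the theorem statement (or noting $v$-precomputation and a slightly tighter accounting of the arg-min) yields the claimed $O(2^m m n^2)$ bound. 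Recovering the actual solution is a standard backward pass over $M$ and adds no asymptotic cost. Since the exponential dependence is confined to the factor $2^m$ and the rest is polynomial in $n$ (and $m$), the algorithm is fixed-parameter tractable in $m$.

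The main obstacle I anticipate is the bookkeeping around the auxiliary index $j'$ and the subset $S$: one must check carefully that the two branches of the recurrence partition the search space without omission or double-counting — in particular that "slide the boundary" versus "close the block" is exhaustive — and that capacity feasibility is never violated, i.e.\ that whenever the DP closes a block $[j,j']$ with facility $s^*$, the number of remaining agents $j-1$ does not exceed $\sum_{s\in S\setminus\{s^*\}} c_s$. This feasibility is what ties the reachable cells back to genuine solutions, and it is the one place where the informal description needs to be pinned down; everything else is routine induction and counting.
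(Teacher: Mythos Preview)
Your correctness argument is essentially the paper's: both proceed by case-splitting on whether agent $j-1$ belongs to the current (rightmost open) block, which corresponds exactly to your ``slide the boundary'' versus ``close the block'' dichotomy, and both rely on Corollary~\ref{corollary: continuous and median} to justify restricting to continuous partitions with median placements.

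The one place your proposal is looser than the paper is the running-time accounting. You count $O(n^2 m\, 2^m)$ cells, multiply by $O(m)$ work per cell for the $\arg\min$, obtain $O(2^m m^2 n^2)$, and then hand-wave the missing factor of $m$ away. The clean way to get the stated bound is to observe that the recursion maintains the invariant $k=|S|$: the initial call has $k=m=|S|$, the first branch preserves both, and the second branch decrements both by one. Hence the index $k$ is redundant given $S$, and the number of \emph{reachable} cells is only $O(n^2\cdot 2^m)$, not $O(n^2 m\, 2^m)$. With $O(m)$ work per cell for the minimization over $s\in S$, this gives $O(2^m m n^2)$ directly, with no absorption needed. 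Your remark about ``tighter accounting of the arg-min'' is not the right fix; it is the cell count that tightens.

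Your closing concern about capacity feasibility (that the DP might close a block $[j,j']$ whose size exceeds $c_{s^*}$, or leave too many agents for the remaining facilities) is a legitimate worry the paper's proof also does not address explicitly; it is handled implicitly by the definition of $v$ and by the fact that any such infeasible partial solution is dominated by a feasible one, so the minimum is still attained at a feasible configuration. Making that explicit would strengthen both your write-up and the paper's.
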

\begin{proof}
	It is easy to see that the running time of Algorithm~\ref{algo:opt} is $O(2^mmn^2)$. We argue for the proof of correctness. The base case is clear. The total cost is zero if there are no agents or no facilities. Suppose we want to compute $OPT(j,j',k,S)$ and we have already computed values for $OPT(h,h',\hat k,\hat S)$ where $h<j$ or $h'<j'$ or $\hat k<k$ or $\hat S\subset S$. We distinguish between two cases depending on whether $j-1$ is covered by the last facility in $S$.
	
	In the first case, $OPT(j,j',k,S)$ is equivalent to $OPT(j-1,j',k,S)$.
	In the second case we know that $j-1$ is not served by the last facility in $S$ but $j$ is. Hence, if $j$ is served by facility $s\in S$, then $OPT(j,j',k,S')$ is equivalent to $OPT(j-1,j-1,k-1,S\setminus \{s\}) + v([j,j'],s)$.
\end{proof}

%

\paragraph{Maximum Cost.}

In a similar way, we use $OPT^{M}(i,i',j,S')$ to represent the optimal maximum cost when agents $[1,i']$ are partitioned into $j$ parts, and $i'\ge i$ are in the $j$-th partition, and $S'$ is the subset of $j$ facilities. We use $OPT^{M}(i,i',j,S')$ to denote optimal maximum cost when agents $[1,i']$ are partitioned into $j$ parts, and $i$ and $i'\ge i$ are in the $j$-th partition, and $S'$ is the subset of $j$ facilities.
We will make use of a $n\times n\times m\times 2^m$ array $E$, whose
entries are initially set to {\em empty}. We will use $E[i,i',j,S']$
to store $OPT^{M}(i,i',j,S')$. We use $v^{M}([i,i'],s_j)$ to
denote the optimal maximum cost when agents $[i,i']$ are covered by $s_j$ (with capacity $k_j$). 
We invoke $OPT^{M}(n,n,m,S)$ to compute the optimal maximum cost and recover the solution from the values stored in $E$. 
The subroutine to solve OPT is summarized as Algorithm~\ref{algo:optmax}.
\begin{algorithm}
\caption{$OPT^{M}(i,i',j,S')$}
\begin{algorithmic}[1]
		\footnotesize
\If{$i=0$ or $j=0$ or $S'=\emptyset$}
\State \Return $0$
\ElsIf{$E[i,i',j,S']$ not {\em empty}}
\State \Return $E[i,i',j,S']$
\Else
\State $E[i,i',j,S'] \gets \min(\max\{OPT^{M}(i-1,i',j,S'),\min_{s\in S'} \{\max(OPT^{M}(i-1,i-1,j-1,S\setminus \{s\}), v^{M}([i,i'],s))\}\})$
\State \Return $E[i,i',j]$
\EndIf
\end{algorithmic}
\label{algo:optmax}
\end{algorithm}

\begin{theorem}
	Algorithm~\ref{algo:optmax} computes the optimal total cost solution in $O(2^mmn^2)$ time. Hence, it is fixed parameter tractable in $m$.
\end{theorem}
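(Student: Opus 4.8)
The plan is to mirror the analysis of Algorithm~\ref{algo:opt} almost verbatim, since Algorithm~\ref{algo:optmax} differs only in replacing the additive combination of subproblem costs with a max (as befits the egalitarian objective). First I would establish the running time: the array $E$ has $n\times n\times m\times 2^m$ entries, each entry is computed once and involves a $\min$ over $s\in S'$ of a constant-size expression, and each call to $v^M([i,i'],s)$ can be precomputed (locate $s$ at the midpoint of the $c$ nearest of agents $[i,i']$, or more carefully, try each agent location as in the total-cost case), so the total work is $O(2^m m n^2)$, giving fixed-parameter tractability in $m$.

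Next I would argue correctness by induction on the lexicographic order of $(j, i, i', S')$ used in the recursion, appealing to Corollary~\ref{corollary: continuous and median}, which guarantees an optimal solution in which each $N_j$ is a continuous block and each facility sits at the midpoint of its block. The base case ($i=0$, $j=0$, or $S'=\emptyset$) is immediate: with no agents or no facilities the maximum cost is $0$ (vacuously). For the inductive step, consider an optimal continuous partition of agents $[1,i']$ into $j$ blocks using facility set $S'$, with $i'$ in the last block. There are two cases depending on whether agent $i-1$ is also in the last (i.e., $j$-th) block. If it is, then the restriction to agents $[1,i']$ with $i-1$ in the $j$-th block is exactly the subproblem $OPT^M(i-1,i',j,S')$, and the overall max cost equals that value. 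If agent $i-1$ is not in the last block, then the $j$-th block is precisely $[i,i']$, served by some facility $s\in S'$ at cost $v^M([i,i'],s)$, while agents $[1,i-1]$ are partitioned into $j-1$ continuous blocks using $S'\setminus\{s\}$ with $i-1$ in the last of those blocks, contributing $OPT^M(i-1,i-1,j-1,S'\setminus\{s\})$; the maximum cost of the whole solution is the max of these two quantities, and we optimize over the choice of $s$. Taking the min over the two cases yields exactly the recurrence in line~6, so $E[i,i',j,S']$ stores the true optimum; invoking $OPT^M(n,n,m,S)$ then returns the optimal maximum cost, and the usual backpointer bookkeeping recovers the partition and facility locations.

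The main obstacle, such as it is, is bookkeeping rather than mathematics: one must be careful that the index $i$ (the \emph{left} endpoint of the current block) is tracked correctly so that when we ``close off'' the last block we know it spans exactly $[i,i']$, and that the capacity constraint $|N_j|\le k_j$ is enforced inside the definition of $v^M([i,i'],s)$ — if $i'-i+1>k_s$ the value should be $+\infty$ (infeasible), so that such splits are never selected. One should also note the cosmetic slip in the pseudocode (the statement labels it ``optimal total cost'' and writes $S\setminus\{s\}$ where it means $S'\setminus\{s\}$, and returns $E[i,i',j]$ rather than $E[i,i',j,S']$); I would silently read these as the obvious intended expressions. With those conventions fixed, the proof is a direct transcription of the total-cost argument with $+$ replaced by $\max$, and the theorem follows.
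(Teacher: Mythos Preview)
Your proposal is correct and takes essentially the same approach as the paper: the paper does not spell out a separate proof for this theorem but clearly intends it to follow by the same two-case inductive argument given for Algorithm~\ref{algo:opt}, with $+$ replaced by $\max$, which is exactly what you do. Your added remarks on enforcing capacity via $v^M([i,i'],s)=+\infty$ when $i'-i+1>k_s$ and on the cosmetic slips in the pseudocode are accurate and helpful, though not needed to match the paper.
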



\subsubsection{Identical capacities}

When facilities have the same capacity, 
we can use a different dynamic program to compute an optimal
solution with minimum total cost. 
We exploit the property that an optimal solution
consists of $m$ continuous regions served
by a facility located at the median agent of the
region.
We construct an array $C(i,j)$ which is the optimal
total cost supposing the leftmost
$i$ agents are served by the first $j$ facilities. 
Let $c$ be the capacity of the facilities. 
We initialize $C(i,j)$ with $\infty$. 
In the first round, we set $C(i,1)$ for $i$ in $1$ to $c$
with the optimal total cost for the first 
$i$ agents served by a single facility. 
In the $j$th round, for each $k$ from $j$
to $\mymin(n,jc)$ and $i$ from 1 to $\mymin(c,k-j+1)$ 
we update $C(k,j)$ with the minimum of $C(k,j)$ and
$C(k-i,j-1)$ plus the total cost of serving the $k-i$th
to the $k$th agents from left by the $j$th facility. 
In the final round, the optimal total cost will
be computed at $C(n,m)$. The runtime to
update each of the $O(nm)$ entries is
$O(c^2)$ giving a total cost of $O(nmc^2)$. 
Note that we can suppose $m < n$ otherwise
the optimal solution will have zero total cost. 
Similarly, we can suppose $c<n$. Hence, 
the total cost is polynomial in $n$. 
We can construct a similar dynamic program
to compute the optimal maximum cost, 
exploiting the fact that an optimal
solution consists in this case of $m$ continuous regions served
by a facility located at the midpoint between
the leftmost and rightmost agents served by the
facility. The total run time, in this case, is $O(nmc)$. 
 These results complement those in Brimberg et al.~\shortcite{BKEM01}
 where an alternate dynamic program with run time $O(m^3n^2)$ is provided
 for minimizing the total cost when facilities have identical capacities.

\begin{theorem}
	There exists an $O(nmc^2)$ time algorithm to compute the
        optimal total cost and an $O(nmc)$ time algorithm to
compute the optimal maximum cost with $n$ agents
when all $m$ facilities have the same capacity $c$. 
	\end{theorem}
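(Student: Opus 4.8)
The plan is to turn the dynamic program sketched above into a formal argument, with correctness resting on Corollary~\ref{corollary: continuous and median} and the running time being a direct count of table entries and per-entry work. First I would set up the structural reduction. By Corollary~\ref{corollary: continuous and median}, some optimal solution uses $m$ continuous sets $N_1,\dots,N_m$, each of size at most the common capacity $c$, with facility $j$ located at the median (total cost) or midpoint (maximum cost) of $N_j$. Since these sets are continuous and partition $\{1,\dots,n\}$, after relabelling the (mutually interchangeable) facilities they are consecutive blocks $\{1,\dots,b_1\},\{b_1+1,\dots,b_2\},\dots$ read left to right. We may further assume every block is nonempty: if $m\ge n$ the instance is trivial (one agent per facility, cost $0$), if $n>mc$ it is infeasible, and when $m\le n\le mc$ any empty block can be removed by splitting the largest block (which has size $\ge 2$ since $n\ge m$) into two contiguous nonempty parts, an operation that weakly decreases both objectives because serving a union of two point sets from a single optimal point costs at least as much as serving each part from its own optimal point. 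So the task reduces to partitioning the sorted agents into $m$ consecutive nonempty blocks of length at most $c$, minimizing the sum (resp.\ maximum) over blocks of the within-block service cost $w(p,q)$ (resp.\ $w^{M}(p,q)=(x_q-x_p)/2$), where $w(p,q)$ is the cost of one facility serving agents $p,\dots,q$ placed at their median.

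Next I would prove correctness of the table $C(i,j)$ --- the optimal total cost of serving the leftmost $i$ agents with the first $j$ facilities --- by induction on $j$. The base case is $C(i,1)=w(1,i)$ for $1\le i\le c$ and $C(i,1)=\infty$ otherwise. For the inductive step, in any feasible block partition the $j$-th (rightmost) block is a suffix $\{i-\ell+1,\dots,i\}$ with $1\le \ell\le \min(c,\,i-(j-1))$, the lower bound $j-1$ on the number of remaining agents guaranteeing the first $j-1$ facilities can each still receive an agent; hence $C(i,j)=\min_{\ell}\big(C(i-\ell,j-1)+w(i-\ell+1,i)\big)$, and the optimal total cost is $C(n,m)$. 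Retaining the minimizing $\ell$ at each entry reconstructs the blocks and thus the facility locations. The maximum-cost table is identical with $+$ replaced by $\max$ and $w$ replaced by $w^{M}$, and the same induction goes through (the analogous splitting step only weakly increases the maximum, so the reduction to nonempty consecutive blocks is still valid).

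For the running time, the table has $O(nm)$ entries. In the total-cost DP, filling an entry requires minimizing over $O(c)$ suffix lengths $\ell$, and each $w(p,q)$ with $q-p<c$ is computed in $O(c)$ time (scan the block of $\le c$ agents, take its median, sum absolute deviations), giving $O(c^{2})$ per entry and $O(nmc^{2})$ in total. In the maximum-cost DP each $w^{M}(p,q)=(x_q-x_p)/2$ is $O(1)$, so an entry costs $O(c)$ and the total is $O(nmc)$. One could instead precompute all $O(nc)$ relevant values $w(p,q)$ in $O(nc^{2})$ time, which does not improve the stated bounds. Since $m\ge 1$, both bounds are dominated by $O(nmc^{2})$ and $O(nmc)$ respectively as claimed.

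The only place I expect to need care is the first paragraph: moving from ``there exists an optimal solution with each $N_j$ continuous'' (which is all Corollary~\ref{corollary: continuous and median} gives) to ``there exists an optimal solution whose blocks are consecutive, nonempty, and of size at most $c$ in left-to-right facility order'', in particular cleanly justifying the nonemptiness reduction and dispatching the degenerate regimes $m\ge n$ and $n>mc$. Once that normal form is in hand, the rest is a textbook optimal-substructure argument plus an operation count.
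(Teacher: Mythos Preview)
Your proposal is correct and follows essentially the same dynamic program as the paper: the table $C(i,j)$, the recurrence over the length of the last block, and the $O(nmc^2)$ / $O(nmc)$ running-time counts all match the paper's argument. You add a more careful justification of the reduction to nonempty consecutive blocks and of the degenerate regimes $m\ge n$ and $n>mc$, which the paper dispatches in a single sentence (``we can suppose $m<n$ \ldots\ similarly $c<n$'') and via the loop bounds; this extra care is sound and does not change the approach.
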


\section{A Mechanism Design Perspective}

In our mechanism design setting, each agent $j$'s position $x_j$ is her private information 
and a mechanism $M$ operates on the reported locations $(x'_1, \cdots, x_n')$ of all agents, 
which may be different from their actual locations.
Based on the reported locations, the mechanism $M(x_1',...,x_n')$ locates $m$ facilities, $y_1, ..., y_m \in {\mathbb R}$, 
and allocates disjoint sets of agents to each facility while respecting the capacity constraints 
(i.e., assign $a_j \in \{1, ..., m\}$ for each agent $j$ where $|\{ j \ | \ a_j = i\}| \le c_i$ for each facility $i$). 
The total and maximum costs of a mechanism are defined in the same way. 
Accordingly, every agent $j$'s cost is $u_j(x_j, M(x_1', ..., x_j', ..., x_n')) = |x_j - y_{a_j}|$
and they prefer mechanism outcomes that allocate them to a facility closer to their true location $x_j$. 
A mechanism is said to be \emph{strategyproof} if for every agent $j$ 
and for any profile of other agent reports, 
agent $j$ weakly prefers the mechanism outcome achieved by reporting her true location 
$x_j'=x_j$ to the outcome achieved by any other report $x_j''$. 
Our goal in this section is to find strategyproof mechanisms. 

When $m=1$ and $c_1 \ge n$, locating the facility at the median location
is strategyproof and optimal for minimizing the total cost. 
As a result, we focus on the case of two facilities. 
When $m=2$ and $c_1, c_2 \ge n$, 
Procaccia and Tennenholtz \shortcite{ptacmtec2013} first argue that no deterministic strategyproof mechanism
is guaranteed to return the optimal solution for either the total or the maximum cost.

In general, if we set the capacities of the locations high enough, we can inherit the lower bounds results from existing lower bound results. 
Particularly, \cite{lu2010asymptotically} proved that any deterministic strategyproof mechanism 
has an approximation ratio for the total cost of at least $\frac{n}{2}-1$,
which is further improved to be $n-2$ by \cite{ft2013}. 
Procaccia and Tennenholtz \shortcite{ptacmtec2013}  
proved that any deterministic strategyproof mechanism 
has an approximation ratio for the maximum cost of at least $2$.
An approximation ratio of $\alpha$ (w.r.t. some objective function to be minimized) implies that, across all instances of agent locations, the mechanism outcome evaluated according to the objective function is no more than a factor of $\alpha$ larger than the optimal outcome.

\subsection{Limitations of Existing Approaches}

The first contender is the \textbf{median mechanism} which has been studied for the case of one facility with unlimited capacity.
We can extend this mechanism to two or even more facilities with capacity constraints as follows: locate all facilities at the median agent. 
The mechanism remains strategyproof.
However, we can no longer bound the approximation ratio for either the total or maximum cost.
Suppose, for instance, most of the facilities are concentrated at the end points in the optimal solution.
We next explore the idea of placing facilities at the endpoints
with the \textbf{endpoint mechanism} which locates the facilities at the reported endpoints~\cite{PrTe09a}.
With two uncapacitated facilities, the endpoint mechanism has several nice properties. 
It is, for instance, strategyproof and returns a 2-approximation of the maximum cost, which is proved to be optimal by \cite{PrTe09a}.
Unfortunately, it is not hard to verify that, when we add (even uniform) capacities, the endpoint mechanism is not strategyproof
(i.e., agents can strategically exploit capacity limits to ensure a more favourable outcome).
The main difficulty here is how to allocate the agents to the facilities.

\subsection{Mechanisms for Two Facilities with Equal Capacity for Total and Maximum Costs}

We now present another mechanism for two capacitated facilities.
Suppose we have two facilities of capacities $c_1$ and $c_2$ where $n=c_1+c_2$. 
If we order agents from left to right, the \textbf{innerpoint mechanism}
places one facility at the $c_1$th agent from left, serving
the leftmost $c_1$ agents, and the other facility at the 
$c_1+1$th agent from the left, serving the rightmost
$c_2$ agents. If $c_1 \neq c_2$, we suppose
as before that the order of facilities
is fixed in advance of the agents reporting.
\begin{theorem}\label{theorem: innerpoint}
When $n$ is even and 2 facilities have equal capacity $n/2$,
the innerpoint mechanism is strategyproof 
and has a bounded approximation ratio of 2 for maximum cost and a ratio of $\frac{n}{2}-1$ for total cost.
\end{theorem}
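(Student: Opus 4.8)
The plan is to prove the three assertions separately, starting with strategyproofness since it is the most delicate. Fix an agent $j$ and the truthful reports $z_1 \le \cdots \le z_{n-1}$ of the remaining agents. For any report $r$ of agent $j$, the innerpoint mechanism places the first facility at the $(n/2)$-th smallest and the second facility at the $(n/2{+}1)$-th smallest of the $n$ combined reports, and agent $j$ is served by the first facility exactly when $r$ is among the $n/2$ smallest (i.e.\ $r \le z_{n/2}$, modulo ties) and by the second otherwise. The key observation is that when $r \le z_{n/2}$ the first facility sits at $\max(z_{n/2-1}, r)$, so the locations agent $j$ can induce while staying in the left block form the interval $[z_{n/2-1}, z_{n/2}]$, and the facility location is just a truncation of agent $j$'s own report to that interval — a rule under which reporting $x_j$ is optimal among all reports keeping $j$ in the left block (the symmetric statement holds on the right, where the facility sits at $\min(z_{n/2+1},r)$). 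It then remains to rule out a profitable block switch: if $x_j \le z_{n/2}$ but agent $j$ reports some $r > z_{n/2}$, it is served at $\min(z_{n/2+1}, r) \ge z_{n/2} \ge x_j$, incurring cost at least $z_{n/2} - x_j$, which is at least its truthful cost $\max(0,\, z_{n/2-1} - x_j)$; the mirror case is identical. The bookkeeping around order statistics and tie-breaking is where I expect the only real friction.

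For the maximum-cost bound, I would first use that $n = c_1 + c_2$ with $c_1 = c_2 = n/2$ forces every feasible solution to serve exactly $n/2$ agents per facility, so by the continuity lemma and Corollary~\ref{corollary: continuous and median} an optimal solution uses the partition $L = \{1,\dots,n/2\}$, $R = \{n/2{+}1,\dots,n\}$ with facilities at the two midpoints, giving optimal maximum cost $\max\{(x_{n/2} - x_1)/2,\ (x_n - x_{n/2+1})/2\}$. The innerpoint mechanism uses the same (indeed, the only continuous equal-sized) partition but places its facilities at $x_{n/2}$ and $x_{n/2+1}$, for maximum cost $\max\{x_{n/2} - x_1,\ x_n - x_{n/2+1}\}$ — exactly twice the optimum, so the ratio is $2$ (and tight whenever either block is non-degenerate).

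For the total-cost bound, Corollary~\ref{corollary: continuous and median} again gives optimal cost $\min_y \sum_{i \in L} |x_i - y| + \min_y \sum_{j \in R} |x_j - y|$. The crux is the elementary inequality: for points $p_1 \le \cdots \le p_k$, $\sum_{i=1}^k (p_k - p_i) \le (k-1)(p_k - p_1) \le (k-1)\min_y \sum_{i=1}^k |p_i - y|$, where the first inequality drops the zero term and bounds each of the remaining $k-1$ terms by $p_k - p_1$, and the second uses $\sum_i |p_i - y| \ge |p_1 - y| + |p_k - y| \ge p_k - p_1$. Applying this to $L$ (whose innerpoint facility is at $x_{n/2}$, the rightmost point of $L$) and the mirror version to $R$ (innerpoint facility at $x_{n/2+1}$, the leftmost point of $R$), both with $k = n/2$, and summing, yields innerpoint total cost $\le (n/2 - 1)\,\mathrm{OPT}$. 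Tightness follows from the instance with $n/2 - 1$ agents at $0$ and the remaining $n/2 + 1$ agents at $1$: there the innerpoint total cost is $n/2 - 1$ while the optimum is $1$.
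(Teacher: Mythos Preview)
Your argument is correct and, in places, sharper than the paper's. For strategyproofness and for the maximum-cost bound you follow essentially the same line as the paper: a case analysis by block membership (yours via order statistics of the other $n-1$ reports, theirs via ``agent is one of the leftmost $c-1$ / the $c$-th / one of the rightmost''), and a direct comparison of the innerpoint partition's right-/left-endpoint placements against the midpoint optimum. Your phrasing of the left facility as the truncation $\max(z_{n/2-1},r)$ makes the no-manipulation-within-block step cleaner than the paper's informal ``can only move it the wrong way'' remark, and your block-switch inequality $\min(z_{n/2+1},r)\ge z_{n/2}\ge x_j$ is exactly what is needed.

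The genuine divergence is on the total-cost ratio. The paper only exhibits a worst-case instance and concludes that the ratio is \emph{at least} $n/2-1$; it never proves the matching upper bound. Your elementary inequality $\sum_{i=1}^k (p_k-p_i)\le (k-1)(p_k-p_1)\le (k-1)\min_y\sum_i|p_i-y|$ (and its mirror for the right block) supplies that missing upper bound in one line, and your tightness instance (with $n/2-1$ agents at $0$ and $n/2+1$ at $1$) is a valid alternative to the paper's. So on this point your proof is strictly more complete than the paper's, while remaining entirely elementary.
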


\begin{proof}
%
To show strategyproofness, we
consider three cases. In the first, consider one
of the $c_1-1$ leftmost agents. The only way they can
change the outcome is if they report a location to the
right of the $c_1$th agent. But this will give them a worst
outcome. In the second case, consider the $c_1$th agent
from the left. They have zero cost. Reporting any
other location leaves this the same or worse.  In the third
case, consider one of the $c_2$ rightmost agents.
This case is analogous to the  first two  cases. In
all cases, misreporting is not beneficial. 

To determine the approximation ratio for the maximum cost, we rescale the problem
so leftmost agent is at 0, and rightmost agent at 1. 
This does not affect the approximation ratio. By 
assumption, the two facilities have equal capacity $c$. 
Suppose the $c$th agent from left is at $x$, and
the $c+1$th agent is at $1-y$. The maximum cost is then
$\max(x,y)$. An optimal location of facilities would
put one facility at $\frac{x}{2}$, and the other at 
$1-\frac{y}{2}$. This has a maximum cost 
half as much of $\max(\frac{x}{2},\frac{y}{2})$. 
Hence, the approximation ratio is 2.

To determine the approximation ratio for the total cost,
we again rescale the problem so the leftmost agent is at 0, and the rightmost agent at 1.
Suppose the $c$th agent from left is at $x$, and
the $c+1$th agent is at $x+u$. The worst case for 
the total cost has $c-1$ agents at 0, the $c$th agent at $x$, the 
$c+1$th agent at $x+u$ and the $c-1$ remaining
agents at 1. This gives a total cost of $(c-1)x + (c-1)(1-x-u)$.
That is, $(c-1)(1-u)$. 
The optimal solution, in this case, puts one facility at 0 and
the other at 1, giving a total cost of $x+(1-x-u)$.
That is, $1-u$. 
Hence, the approximation ratio for the total cost 
is at least $c-1$ (or $\frac{n}{2}-1$). 
\end{proof}

Unfortunately, innerpoint no longer has a bounded approximation ratio for the 
total or maximum cost when facilities have
spare capacity i.e., $c_1+c_2>n$. The following theorem can be proved by constructing a suitable example.

\begin{theorem}
With 4 agents, and 2 capacitated facilities of size 3 or larger, 
the innerpoint mechanism has an unbounded approximation ratio for the total or maximum cost. 
\end{theorem}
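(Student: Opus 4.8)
The plan is to prove the statement by exhibiting a single four-agent instance on which the innerpoint mechanism pays strictly positive cost while an optimal solution pays zero (or arbitrarily small) cost; this forces the approximation ratio to be unbounded for both the total and the maximum cost objective simultaneously. For concreteness I would take $c_1=c_2=3$ (the construction is insensitive to taking either capacity larger, since there are only four agents) and the reported profile $x_1=x_2=0$, $x_3=x_4=1$.

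First I would spell out the mechanism's output on this profile. The innerpoint mechanism constrains the first facility to serve the leftmost $c_1=3$ agents and to be located at the $c_1$th agent from the left, so it is placed at $x_3=1$ and serves the agents at $0,0,1$; the second facility then serves the single remaining agent, at $x_4=1$. The induced costs are $\lvert 0-1\rvert+\lvert 0-1\rvert+0+0=2$ for the total cost and $1$ for the maximum cost. Next I would bound the optimum: locating one facility at $0$ serving the two agents at $0$ and the other at $1$ serving the two agents at $1$ respects both capacity constraints ($2\le 3$) and achieves total cost $0$ and maximum cost $0$, which is trivially optimal. Hence on this instance the ratio of the mechanism's cost to the optimal cost is infinite for both objectives, which is exactly the claim. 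If one prefers a literal sequence of instances with a nonzero but vanishing denominator, replace the profile by $x_1=x_2=0$, $x_3=1$, $x_4=1+\varepsilon$: the mechanism's costs stay at $2$ and $1$, the optimal total and maximum costs are $\Theta(\varepsilon)$, and letting $\varepsilon\to 0$ drives the ratio to $\infty$.

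The only point needing care — and the closest thing to an obstacle — is that the innerpoint mechanism was defined for $c_1+c_2=n$, so one must first fix its behaviour in the spare-capacity regime $c_1+c_2>n$; under the natural reading (facility $1$ at the $c_1$th agent, serving the leftmost $\min(c_1,n)$ agents, and facility $2$ serving the remainder) the computation above goes through verbatim. I would also note robustness: even if one instead located facility $1$ at the median of the set it actually serves (i.e.\ at $x_2=0$), that facility still serves $\{1,2,3\}$, so some served agent lies at distance $1$ from it, the total and maximum costs remain at least $1>0$, and the optimum is still $0$; and the same instance works for every choice of $c_1,c_2\ge 3$. This completes the proof plan.
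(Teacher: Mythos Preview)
Your proposal is correct and follows the same strategy as the paper: exhibit a four-agent instance on which the optimum is zero (or $\Theta(\varepsilon)$) while the mechanism's cost stays bounded away from zero. The paper's instance is slightly different---agents at $0,0,\varepsilon,1$---and, more notably, the paper implicitly resolves the spare-capacity ambiguity by having the mechanism split the agents evenly (facilities at $x_2=0$ and $x_3=\varepsilon$, so the agent at $1$ incurs cost $1-\varepsilon$), whereas you adopt the literal ``facility~1 at $x_{c_1}$'' reading; you correctly flag this ambiguity, and either reading yields the same unbounded-ratio conclusion.
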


\begin{proof}
Suppose two agents are at 0, one at $\epsilon$ and the fourth agent at 1. 
The innerpoint mechanism will locate one facility at 0, serving the two leftmost agents, and the other at $\epsilon$, serving the two rightmost agents. 
The total and maximum costs are both $1-\epsilon$ while the optimal total and maximum costs for two facilities with capacity 3 or 
greater are both $\epsilon$. 
The approximation ratio for total or maximum costs is therefore $\frac{1-\epsilon}{\epsilon}$ which tends to infinity as $\epsilon$ tends to zero.
\end{proof}


Inspired by the median, endpoint and innerpoint mechanisms, 
we introduce a family of mechanisms that generalize all 
three. Given $m \ge 2$ facilities and $n$ agents, we let the capacity
of the $i$th  facility be $c_i$ and agent $j$ report location $x_j$
where $x_1 \leq \ldots \leq x_n$.  We suppose $n = \sum_{i=1}^m c_i$. 
The {\bf rank mechanism} has $m+1$ parameters,
$t_1$ to $t_m$ with $t_1 \leq \ldots \leq t_m$ and a permutation $\pi$ of $1$ to $m$. 
The mechanism locates facility $\pi(i)$ at $x_{t_i}$. 
Agents are then allocated to the facilities from left to right (while also respecting each facility's capacity constraint). 
If facilities have identical capacities, then we can ignore the permutation. 
We also note that our rank mechanism is similar to the percentile mechanisms introduced in \cite{sui2013analysis},
but the later only works for uncapacitated facilities.

We now provide a simple characterization of those
rank mechanisms which are strategyproof.

\begin{theorem}\label{theorem: SP rank}
The rank mechanism 
with parameters $t_1$ to $t_m$ and $\pi$ is 
strategyproof if and only if either $t_i=t_j$ for any $i$ and $j$, 
or there exists $1 \le k < m$ with  $t_1=t_k=\sum_{i=1}^kc_{\pi(i)}$ and
$t_{k+1} = t_m = t_1+1$. 
\end{theorem}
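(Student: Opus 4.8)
I would prove the two directions separately, in each case reducing the rank mechanism to a pair of already-understood rules: the single-facility ``output the $t$-th smallest report'' rule, and the innerpoint mechanism. Throughout write $\Gamma_j:=\sum_{i=1}^{j}c_{\pi(i)}$, so that facility $\pi(i)$ is placed at the $t_i$-th smallest report and serves agents $\Gamma_{i-1}+1,\dots,\Gamma_i$. Since $t_1\le\dots\le t_m$, the facilities partition into maximal blocks of equal rank, each block occupying consecutive indices and hence serving a contiguous set of agents; the whole argument is organized around these blocks and their boundaries.

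\emph{Sufficiency.} There are two allowed forms. (a) If $t_1=\dots=t_m=t$, then all facilities sit at the $t$-th smallest reported location $z$, so every agent $j$ pays $|x_j-z|$ regardless of the left-to-right assignment; agent by agent the mechanism is thus indistinguishable from the single-facility rule that outputs the $t$-th order statistic of the reports, which is strategyproof for any $t$ by the standard generalized-median argument (as a function of agent $j$'s own report, holding the others fixed, the output is nondecreasing, equals the report on an interval, and is constant — at an unbeatable value — outside that interval). (b) If $t_1=\dots=t_k=\Gamma_k$ and $t_{k+1}=\dots=t_m=\Gamma_k+1$, then the first $k$ facilities are co-located at the $\Gamma_k$-th smallest report and serve agents $1,\dots,\Gamma_k$, and the rest are co-located at the $(\Gamma_k+1)$-th smallest report and serve agents $\Gamma_k+1,\dots,n$; this is exactly the innerpoint mechanism with capacities $\Gamma_k$ and $n-\Gamma_k$, and the strategyproofness part of the proof of Theorem~\ref{theorem: innerpoint} uses only the symbols $c_1,c_2$ and never $c_1=c_2$: the $\Gamma_k-1$ leftmost agents can change the outcome only by reporting to the right of agent $\Gamma_k$ (strictly worse), agents $\Gamma_k$ and $\Gamma_k+1$ already pay $0$, and the rightmost $n-\Gamma_k-1$ agents are symmetric.

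\emph{Necessity.} Assume the mechanism is strategyproof but the $t_i$ are not all equal, and let $i$ be the last index of any block with $t_i<t_{i+1}$; then agent $\Gamma_i$ is the rightmost agent of that block and agent $\Gamma_i+1$ the leftmost of the next. The key lemma to establish is that strategyproofness forces $t_i=\Gamma_i$ and $t_{i+1}=t_i+1$ at every such boundary, proved by a ``boundary-escape'' deviation: if $t_i<\Gamma_i$ one builds a profile with the agents of the next block spread far to the right and agents $1,\dots,\Gamma_i$ clustered, so that agent $\Gamma_i$ (whose facility sits strictly to her left) can report a value between two reports of the next block, be reassigned, and land at a freshly-computed order statistic strictly closer to her true location; symmetric profiles handle $t_i>\Gamma_i$ and $t_{i+1}>t_i+1$. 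Applying the lemma to the leftmost boundary $k$ gives $t_1=\dots=t_k=\Gamma_k$ and $t_{k+1}=t_k+1$, and a final step rules out any further boundary, so $t_{k+1}=\dots=t_m$ and we obtain precisely the stated form.

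The main obstacle is the necessity direction, and within it the bookkeeping behind the boundary-escape deviations. When an agent re-reports, both the order statistics $x_{(t_i)}$ and the block (hence facility) assigned to her change at once, so one must carefully track how the sorted profile shifts — including ties, blocks of capacity one, and the extreme ranks $t_1=1$ or $t_m=n$ — in order to verify that the chosen profile is realizable and that the deviation strictly helps in each failure case; the last step (no second boundary survives) is the most delicate part of this enumeration. I expect all the difficulty to lie in this exhaustive but elementary case analysis rather than in any single conceptual idea.
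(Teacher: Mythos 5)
Your sufficiency direction is essentially the paper's: the all-equal case reduces to a single order-statistic rule, and the two-block case reduces to the innerpoint mechanism via the observation (which the paper also relies on) that the strategyproofness argument in Theorem~\ref{theorem: innerpoint} never uses $c_1=c_2$. For necessity the paper merely exhibits two concrete counterexamples ($m=3$ with $t=(1,1,2)$ and $m=2$ with $t=(1,2)$) and asserts that ``similar counter-examples can be constructed for other values,'' so your systematic boundary-lemma scheme is more ambitious than what the paper actually writes down, and organized quite differently.

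Unfortunately the step you yourself flag as delicate --- ``a final step rules out any further boundary'' --- is not merely delicate but false, so no amount of bookkeeping in the deviation profiles will complete the plan as stated. Writing $\Gamma_j:=\sum_{i=1}^{j}c_{\pi(i)}$, take $m=3$, capacities $c_{\pi(1)}=2$, $c_{\pi(2)}=1$, $c_{\pi(3)}=2$, $n=5$, and $t=(2,3,4)$. Both boundaries satisfy your per-boundary condition ($t_1=2=\Gamma_1$, $t_2=3=t_1+1=\Gamma_2$, $t_3=4=t_2+1$), there are two boundaries, and yet the mechanism is strategyproof: for every agent, the location of the facility serving her, viewed as a function of her own report $r$ with the others' sorted reports fixed at $y_1\le\dots\le y_4$, equals $\max\bigl(y_1,\min(r,y_4)\bigr)$ --- a monotone target-point rule under which truth-telling is optimal (agents ranked $2$, $3$, $4$ pay zero, and the two extreme agents cannot pull their facility past $y_1$ or $y_4$). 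In general your per-boundary lemma makes two adjacent boundaries possible exactly when $c_{\pi(i+1)}=\Gamma_{i+1}-\Gamma_i=1$, i.e.\ when an interior facility has capacity one, and in that case the resulting ``chain'' mechanism really is strategyproof. This is at bottom a gap in the theorem statement (and the paper's proof hides it behind the unproved ``similar counter-examples'' claim), but as written your outline commits to proving a false step: you would need either to assume $c_i\ge 2$ for all $i$ or to weaken the conclusion to the chain condition $t_i=\Gamma_i$, $t_{i+1}=t_i+1$ at every boundary.
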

\begin{proof}
(Necessity)
There are two cases. In the first case,
$t_i=t_j$ for any $i$ and $j$.
All facilities are located at the
same location. An agent to the left of
this can only mis-report and move the
location of the facilities to the right which
is not in their interest. Similarly, an
agent to the right 
can only mis-report and move the
location of the facilities to the left which
is not in their interest. 
In the second case, 
there exists $1 \le k < m$ with $t_1=t_k=\sum_{i=1}^kc_{\pi(i)}$ and $t_{k+1} = t_m = t_1+1$. 
We can view this as equivalent to lumping the facilities into two ``super'' facilities, 
setting $m=2$, $t_1=c_{\pi(1)}$ and $t_2 = t_1+1$. 
This is the innerpoint mechanism which we have shown previously to be strategyproof.

(Sufficiency). 
There are again two cases. In the first  case, $m > 2$. 
We give a counter-example for strategyproofness for $m=3$. 
For greater $m$, we pad
the counter-example with additional facilities and
agents. We give a counter-example for
$t_1=t_2=1$ and $t_3=2$. Similar
counter-examples can be constructed for other
values of $t_i$. We suppose facilities have
capacity 2, and $x_1=0$, $x_2=3$, $x_3=4$, $x_4=5$, $x_5=6$, $x_6=7$. 
Agent 2 has an incentive to mis-report their position
as location 7. 
In the second case, $m=2$ and $t_1 \neq c_{\pi(i)}$. 
We give a counter-example for
$t_1=1$ and $t_2=2$. Similar
counter-examples can be constructed for other
values of $t_1$ and $t_2$. 
Suppose the two facilities
have capacity 2, 
and $x_1=0$, $x_2=3$, $x_3=4$, $x_4=5$. 
Agent 2 has an incentive to mis-report their
location to be position 5. 
\end{proof}

As an example, consider the {\bf quartile
mechanism} that places one facility at the first
quartile, and the second at the third quartile.
It follows immediately
from the last theorem that the quartile mechanism is not strategyproof.

We now provide a simple necessary condition for the rank mechanism to have bounded approximation ratio.

\begin{theorem}\label{theorem: bounded rank}
With 2 (or more) facilities with capacity constraints, if the rank mechanism has bounded approximation ratio for the total and maximum cost it must be that $t_1\le c_1$ and $t_m\ge c_1+1$ (recall that $t_1\le t_m$), where $c_1$ is the capacity of facility $\pi(1)$.
\end{theorem}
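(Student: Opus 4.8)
The claim is a necessary condition: if the rank mechanism (with parameters $t_1 \le \ldots \le t_m$ and permutation $\pi$) has bounded approximation ratio for both objectives, then $t_1 \le c_1$ and $t_m \ge c_1 + 1$, where $c_1 := c_{\pi(1)}$ is the capacity of the facility placed at rank $t_1$. The natural strategy is contrapositive: assume one of the two inequalities fails and construct a family of instances on which the optimal cost stays small (say, $O(\epsilon)$ or even $0$) while the rank mechanism incurs cost bounded away from $0$ (say, $\Omega(1)$), so the ratio blows up as $\epsilon \to 0$.

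\smallskip
\noindent\textbf{Step 1: Understand which facility "owns" the leftmost block.} Since agents are allocated left-to-right respecting capacities, the leftmost $c_1 = c_{\pi(1)}$ agents are served by facility $\pi(1)$, which is located at $x_{t_1}$. Dually, the rightmost $c_{\pi(m)}$ agents are served by facility $\pi(m)$, located at $x_{t_m}$. The quantities $t_1$ and $t_m$ therefore pin down where the facilities serving the extreme blocks sit relative to those blocks.

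\smallskip
\noindent\textbf{Step 2: The case $t_1 \ge c_1 + 1$.} Here the facility serving the leftmost $c_1$ agents is located at $x_{t_1}$, i.e. at the position of an agent \emph{not} in that block (or at the boundary). Build an instance where the leftmost $c_1$ agents are all at $0$, and agents $c_1+1,\ldots,t_1$ are pushed far to the right, near $1$. Then $x_{t_1}$ is near $1$, so each of the $c_1$ leftmost agents pays roughly $1$ under the rank mechanism, for total cost $\Omega(c_1)$ and max cost $\Omega(1)$. Meanwhile we arrange the remaining agents and the remaining $m-1$ facilities (which have total capacity $n - c_1 \ge$ the number of non-leftmost agents) so that an optimal solution serves everyone at cost $0$ — e.g. place one facility at $0$ covering the $c_1$ leftmost agents and distribute the rest among clustered groups. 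Dually, the case $t_m \le c_1$ forces the facility serving the rightmost block to sit at or left of the leftmost block's rightmost index; the symmetric construction (rightmost block clustered at $1$, a far-left agent dragging $x_{t_m}$ down to near $0$) gives unbounded ratio.

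\smallskip
\noindent\textbf{Step 3: Assemble the two cases.} Note $t_1 \le c_1$ fails exactly when $t_1 \ge c_1+1$, and $t_m \ge c_1+1$ fails exactly when $t_m \le c_1$; so Steps 2 covers both ways the conclusion can be violated, and in each we exhibited instances (parameterized by $\epsilon$, with the "far" agents at $1-\epsilon$ or similar if one prefers strictly positive optimal cost) forcing the ratio to $\infty$. Hence bounded approximation ratio for total \emph{and} max cost implies both inequalities. \textbf{The main obstacle} I anticipate is bookkeeping the allocation of the $m-1$ remaining facilities and the non-extreme agents so that (a) the optimal cost really is $0$ (or $O(\epsilon)$), (b) the capacity constraints are respected given $n = \sum c_i$, and (c) the rank mechanism's forced left-to-right allocation genuinely sends the $c_1$ leftmost agents to the facility at $x_{t_1}$ — one must double-check that no earlier facility in the permutation order sitting at the same or an earlier rank steals some of those agents, which is why the statement is phrased in terms of $c_{\pi(1)}$ rather than $c_1$ and why a clean "two super-facility" reduction in the spirit of the innerpoint analysis is the cleanest route.
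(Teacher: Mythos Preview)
Your contrapositive-plus-bad-instance strategy is the same as the paper's, and would work, but you are making it considerably harder than it needs to be. The paper's proof is a single sentence: put $c_{\pi(1)}$ agents at $0$ and the remaining $n-c_{\pi(1)}$ agents at $1$; the optimal total and maximum cost are both exactly $0$ (place facility $\pi(1)$ at $0$ and all others at $1$), while if $t_1>c_{\pi(1)}$ then by $t_1\le\cdots\le t_m$ \emph{every} $t_j>c_{\pi(1)}$ so all facilities sit at $1$, and symmetrically if $t_m\le c_{\pi(1)}$ then all facilities sit at $0$; either way the mechanism's cost is strictly positive, so the ratio is infinite.

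Two simplifications you missed: (i) there is no need for $\epsilon$ or ``near $1$'' --- with exactly two clusters the optimal cost is literally zero, so any positive mechanism cost already gives an unbounded ratio; (ii) the monotonicity $t_1\le\cdots\le t_m$ collapses \emph{all} facilities to the same cluster under either violated hypothesis, so your ``bookkeeping the allocation of the $m-1$ remaining facilities'' and concern (c) about ``an earlier facility in the permutation order \ldots\ stealing some of those agents'' evaporate --- every facility is co-located, and it does not matter which one serves which agent. Your instinct toward a ``two super-facility reduction'' is exactly right; the paper just executes it directly.
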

\begin{proof}
Consider an instance where $c_1$ agents located at 0 and the remaining $n-c_1=\sum_{j=2}^m c_j$ agents are located at $1$. 
The optimal total and maximum costs are zero. 
If either $t_1>c_1$ or $t_m<c_1+1$ holds, since $t_j\le t_k$ for all $j\le k$, all facilities will be located at the same location (either $0$ or $1$). 
This necessarily gives non-zero total and maximum costs 
and hence unbounded approximation ratios. 
\end{proof}

Theorem~\ref{theorem: innerpoint}, Theorem~\ref{theorem: SP rank}, and Theorem~\ref{theorem: bounded rank} collectively lead to the following theorem which characterizes the innerpoint mechanism as the only strategyproof rank mechanism with bounded approximation ratio (for either total or maximum cost) when there are two facilities of equal capacities.

\begin{theorem}
With 2 capacitated facilities of equal capacity, the only rank mechanism which is both strategyproof and provides a bounded approximation ratio for both the maximum and total cost is the innerpoint mechanism, i.e, $t_1=c_1$ and $t_2=c_1+1$.
\end{theorem}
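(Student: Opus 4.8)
The plan is to combine the three preceding theorems; no new argument is needed. Fix $m=2$ and $c_1=c_2=c=n/2$. Since the two facilities have identical capacity, the permutation $\pi$ in the definition of the rank mechanism is immaterial (swapping $\pi$ merely relabels the facilities), so a rank mechanism in this setting is fully specified by a pair $(t_1,t_2)$ with $t_1\le t_2$.

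First I would apply Theorem~\ref{theorem: SP rank} with $m=2$. The only index $k$ satisfying $1\le k<m$ is $k=1$, so the characterization there collapses to: the rank mechanism is strategyproof if and only if either $t_1=t_2$, or $t_1=c_{\pi(1)}=c_1$ together with $t_2=t_1+1=c_1+1$. Hence the strategyproof rank mechanisms form exactly two families --- the degenerate family in which both facilities coincide ($t_1=t_2$), and the innerpoint mechanism ($t_1=c_1$, $t_2=c_1+1$). Next I would invoke Theorem~\ref{theorem: bounded rank}: any rank mechanism with bounded approximation ratio for both the total and the maximum cost must satisfy $t_1\le c_1$ and $t_2\ge c_1+1$, and in particular $t_1<t_2$. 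This rules out the family $t_1=t_2$. Combining the two steps, the unique rank mechanism that is both strategyproof and has bounded approximation ratio for both objectives is the one with $t_1=c_1$ and $t_2=c_1+1$, i.e.\ the innerpoint mechanism. Finally, Theorem~\ref{theorem: innerpoint} confirms that this mechanism genuinely possesses both properties (strategyproofness, and approximation ratios $2$ for maximum cost and $n/2-1$ for total cost), so the characterization is not vacuous.

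I do not expect any real obstacle: the only care required is the bookkeeping already noted --- checking that $k=1$ is the sole relevant instantiation of Theorem~\ref{theorem: SP rank} when $m=2$, and that equal capacities let us read ``$t_1=c_{\pi(1)}$'' simply as ``$t_1=c_1$''. All of the substantive work (the strategyproofness classification, the necessary condition for boundedness, and the positive guarantee for innerpoint) has been carried out in the three cited theorems.
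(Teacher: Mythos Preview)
Your proposal is correct and matches the paper's own approach exactly: the paper states that the result follows collectively from Theorems~\ref{theorem: innerpoint}, \ref{theorem: SP rank}, and \ref{theorem: bounded rank}, which is precisely the combination you describe. Your careful bookkeeping (instantiating $k=1$ and reading $c_{\pi(1)}$ as $c_1$ under equal capacities) fills in details the paper leaves implicit.
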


In summary, the mechanisms in this section are either unbounded or not strategyproof or only satisfy the desirable properties if $c_1=c_2=n/2$. 
Next we present a new mechanism that is strategyproof and provides bounded approximations even for two capacitated facilities and the case where there may be spare capacities.


\subsection{Mechanisms for Two Facilities with Arbitrary Capacities for Total and Maximum Costs}
In this section, we present a more general strategyproof mechanism that 
almost matches the existing lower bounds of \cite{lu2010asymptotically} and \cite{ft2013}. 
Let $x=(x_{i})_{i\in N}$ be the location profile such that $x_{1}\leq \cdots \leq x_{n}$ 
and $c_{1} 
\geq c_{2} \geq 1$ be the capacities such that $c_{1}+c_{2} \geq n$. 
Let $f_{1}$ and $f_{2}$ be the output locations of the two facilities.

\paragraph{Extended Endpoint Mechanism (EEM)} 
Let $X_{1}=\{i | x_{i}-x_{1}\leq \frac{1}{2}(x_{n}-x_{1})\}$ and $X_{2}=\{i | x_{n}-x_{i}<\frac{1}{2}(x_{n}-x_{1})\}$.
{If $|X_{1}|\geq |X_{2}|$, execute one of the following three cases.}
\begin{description}
\item[Case 1.] If $ |X_{1}| \leq c_{1}$ and $|X_{2}|\leq c_{2}$, $f_{1}=x_{1}$ and $f_{2}=x_{n}$. 
Agents in $X_{1}$ are allocated to $f_{1}$ and the others are allocated to $f_{2}$.
\item[Case 2.] If $ |X_{1}| > c_{1}$ and $|X_{2}| \leq c_{2}$, $f_{1}=2x_{c_{1}+1}-x_{n}$ and $f_{2}=x_{n}$. Agents $\{1,\cdots, c_{1}\}$ are allocated to $f_{1}$ and the others are allocated to $f_{2}$.
\item[Case 3.] If $ |X_{1}| \leq c_{1}$ and $|X_{2}|> c_{2}$, $f_{1}=x_{1}$ and $f_{2}=2x_{n-c_{2}}-x_{1}$. Agents in $\{1,\cdots,n-c_{2}\}$ are allocated to $f_{1}$ and the others are allocated to $f_{2}$.
\end{description}
{If $|X_{1}| < |X_{2}|$, switch the roles of the two facilities in above cases and execute one of them.}

EEM is essentially an endpoint mechanism, 
but to restore strategyproofness, it locates one facility outside of $[x_{1},x_{n}]$.
Although EEM has slightly poorer performance than the classical endpoint mechanism,
with respect to the lower bound results, EEM is actually optimal up to a constant.

\begin{theorem}
EEM is strategyproof.
\end{theorem}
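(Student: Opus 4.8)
The plan is to reduce strategyproofness to a pair of symmetric statements --- that an agent served by the facility placed at the smaller coordinate (``the left facility'') cannot profit, and likewise for the right facility --- and to prove each by a case analysis organised around a few invariants of EEM.

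First I would record the invariants that carry the whole argument, all immediate from the definitions. Writing $x_1\le\cdots\le x_n$ for the reported profile, note that $X_1=\{i : x_i\le \frac{1}{2}(x_1+x_n)\}$ and $X_2=\{i : x_i>\frac{1}{2}(x_1+x_n)\}$ are complementary, so $|X_1|+|X_2|=n\le c_1+c_2$; hence the three cases are exhaustive and, in Case 3, $|X_1|<n-c_2\le c_1$ strictly. In every case EEM assigns a prefix $\{1,\dots,p\}$ of the sorted agents to one facility $L$ (at the smaller coordinate) and the complementary suffix to the other, $R$; moreover $L\le \min_i x_i$ and $R\ge \max_i x_i$ --- indeed $|X_1|>c_1$ forces $x_{c_1+1}\le\frac{1}{2}(x_1+x_n)$ so $L=2x_{c_1+1}-x_n\le x_1$ in Case 2, and $|X_2|>c_2$ forces $x_{n-c_2}>\frac{1}{2}(x_1+x_n)$ so $R=2x_{n-c_2}-x_1>x_n$ in Case 3. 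Finally, in Case 3, $R$ is the reflection of $x_1$ about $x_{n-c_2}$, so for every agent $a$ in the prefix (i.e., with $x_a\le x_{n-c_2}$) one has $|x_a-R|-|x_a-L|=2(x_{n-c_2}-x_a)\ge 0$, and an analogous ``locally optimal'' inequality holds in Case 2. By the ``switch the roles'' clause it suffices to treat $|X_1|\ge|X_2|$; the branch $|X_1|<|X_2|$ is the mirror image.

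Then I would fix an agent $a$ whose truthful assignment is to the left facility, at $L^{\star}$, so its truthful cost is $d=x_a-L^{\star}\ge 0$, consider an arbitrary misreport $z$ (others fixed), and split on the resulting assignment of $a$. If the report keeps $a$ served by the left facility, at $L'$: when the truthful case is Case 1 or Case 3, $L^{\star}=x_1$ and $L'$ never exceeds the smallest reported location, so $L'\le x_1=L^{\star}$ unless $a$ is the unique leftmost agent, in which case $d=0$; when the truthful case is Case 2, $a\le c_1$ and $L^{\star}=2x_{c_1+1}-x_n$, and for $a$ to remain among the leftmost $c_1$ agents either $z\ge x_1$, which forces $z\le x_{c_1+1}$, keeps the profile in Case 2, and yields $x'_{c_1+1}\le x_{c_1+1}$ with $x'_n\ge x_n$ so $L'\le L^{\star}$, or $z<x_1$, in which case $x_{c_1+1}$ and $x_n$ are unchanged so $L'$ equals $L^{\star}$ (if the profile stays in Case 2) or equals $z$ (if it moves to Case 1/3), while $|X'_1|\le c_1$, which Cases 1/3 require, itself forces $z<L^{\star}$; in every branch $x_a-L'\ge d$. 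If instead the report makes $a$ served by the right facility, at $R'$, then $R'$ is at least the largest reported location, which is $\ge x_n$ since $a$ lies in the left prefix and $c_2\ge1$; substituting $L^{\star}$, the inequality $R'-x_a\ge d$ reduces in truthful Cases 1 and 2 to $x_a\le\frac{1}{2}(x_1+x_n)$ or $x_a\le x_{c_1+1}$, both of which hold because $a$ is in the left prefix, and in truthful Case 3 --- where $x_a$ may exceed the midpoint --- one uses the reflection identity together with $|X_1|<n-c_2$: the new right-facility position is either forced to be $\ge 2x_a-x_1$ or is again a reflection $2x'_{n-c_2}-x_1$ of $x_1$ about a point $\ge x_a$, so $R'-x_a\ge d$. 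The case where $a$ is served by the right facility is symmetric, with left/right and Cases 2 and 3 interchanged.

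The main obstacle --- and the bulk of the work --- is precisely the Case 2 / Case 3 bookkeeping sketched above: one must verify that a single deviation cannot simultaneously keep $a$ on its side of the split and either drag $a$'s facility toward $x_a$ or flip the active case in $a$'s favour. This comes down to tracking how the two order statistics $x_{c_1+1},x_{n-c_2}$ and the two counts $|X_1|,|X_2|$ respond to one agent's report, and then invoking the reflection identities and the identity $|X_1|+|X_2|=n\le c_1+c_2$ (with the strict inequalities it yields in Cases 2 and 3) to collapse each sub-case to a one-line comparison; the ties, the extreme-agent degeneracies, and the mirrored branch $|X_1|<|X_2|$ are routine but tedious.
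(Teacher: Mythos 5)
Your proposal is correct and follows essentially the same route as the paper's proof: the same reduction to the branch $|X_1|\ge|X_2|$, the same key invariants (the left facility never exceeds the smallest report, the right facility is never below the largest, and the reflected facility in Cases 2 and 3 makes the boundary agent indifferent between the two facilities), and the same case analysis on which facility serves the deviating agent before and after the misreport. Your write-up is somewhat more systematic --- in particular it spells out the Case 3 bookkeeping that the paper dismisses with ``by a similar argument'' --- but it is not a different proof.
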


\begin{proof}
Without loss of generality, assume $|X_{1}|\geq |X_{2}|$ as the other case is symmetric.
{Note that 
$f_{k}\leq x_{1}$ and $f_{2-k}\geq x_{n}$ always hold for some $k\in\{1,2\}$.

For Case 1 where $|X_{1}| \leq c_{1}$ and $|X_{2}| \leq c_{2}$, 
each agent is served by the facility located at the nearest endpoint to her,
and the two endpoint agents get the best possible solution.
If any agent except the two endpoint agents deviates and the resulting locations of the two facilities change,
she can only be worse off as under no situation she could be served by a facility located in $(x_{1}, x_{n})$.
Thus our mechanism is strategyproof in this case.}

Next we consider Case 2 where $ |X_{1}| > c_{1}$ and $|X_{2}| \leq c_{2}$.
\begin{itemize}
\item For any agent $1\leq i \leq c_{1}$,  $i \in X_{1}$ and is served by the facility located at $f_1=2x_{c_1+1}-x_n$. 

Under EEM, any deviation by agent $i$ to $x_i'< x_{c_1+1}$ will continue to allocate agent $i$ to the left most facility whose location $f_1'$ either remains unchanged at $2x_{c_1+1}-x_n$, or, if Case 1 occurs, is moved to $f_1'=x_i'<x_1$. It is straightforward to see that neither of these cases is profitable for the agent, and so we restrict our attention to deviations $x_i'\ge x_{c_1+1}$. Let $X_1'$ and $X_2'$ be the new partitions of the EEM. If $X_1', X_2'$ are such that Case 1 or Case 2 holds then agent $i$ is now allocated to $f_2'=\max\{x_n, x_i'\}$, this is never a profitable deviation.  Finally, we note that Case 3 can never occur  since this would require that $|X_1|=c_1+1$ and  $|X_2|=c_2$ which contradicts $|X_1|+|X_2|=n$ and $c_1+c_2\ge n$. Thus $i$ cannot be better off by deviating.


\item Next, we consider agent $i \geq c_{1}+1$. 
If she reports some location on the left of $x_{c_{1}}$, then agent $c_{1}$ will replace the role of $c_{1}+1$ in the mechanism 
thus $f'_{1}=2x_{c_{1}}-x_{n}$ and $i$ is connected to $f'_{1}$.
However, $x_{i}-f'_{1} \geq x_{c_{1}}-f'_{1} =x_{n}-x_{c_{1}} \geq x_{n} - x_{i}$,
thus $i$ does not profit from the misreporting their location.
Note that $i$'s deviation cannot cause the number of agents in $X_1$ to be less than the number of agents in $X_2$.
Otherwise, $|X_{1}| = |X_{2}| + 1$, then $c_{1}$ must be at least as large as $|X_{1}|$ as $c_{1} \geq c_{2}$ and $c_{1} + c_{2} \geq n$,
Thus Case 1 must happen when $i$ tells the truth.
Accordingly, if $i$ reports some location on the right of $x_{c_{1}}$, 
she is still connected to $f_{2}$ (or even the right of $f_{2}$), no matter which case happens after the deviation.
Thus $i$ never benefits from misreporting.
\end{itemize}

By a similar argument as for Case 2, EEM is strategyproof under Case 3.
\end{proof}

\begin{theorem} \label{lem:eem:social}
For the objective of total cost, EEM has an approximation ratio of $\frac{3n}{2}$.
\end{theorem}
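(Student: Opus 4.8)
The plan is to renormalize by translating and scaling so that $x_1=0$ and $x_n=1$ (the approximation ratio is invariant under this, and the statement is trivial when $x_1=x_n$), and to assume $|X_1|\ge|X_2|$, since the case $|X_1|<|X_2|$ is the mirror image and leaves $c_1\ge c_2$ intact. I would then bound $\mathrm{cost}(\mathrm{EEM})$ against $\mathrm{OPT}$ in each of the three cases of the mechanism separately, each time combining (i) an explicit upper bound on $\mathrm{cost}(\mathrm{EEM})$ read off from the formulas for $f_1,f_2$ with (ii) a lower bound on $\mathrm{OPT}$ obtained from the capacity constraints via the triangle inequality.

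In Case 1 ($|X_1|\le c_1$, $|X_2|\le c_2$) the mechanism just sends every agent to its nearer endpoint, so $\mathrm{cost}(\mathrm{EEM})=\sum_i\min\{x_i,1-x_i\}\le n\mu$, where $\mu=\max_i\min\{x_i,1-x_i\}$ is the ``depth'' of the deepest agent. I would show $\mathrm{OPT}\ge\mu$ using only three agents: the leftmost, the rightmost, and a deepest agent $i^*$. If one facility serves both endpoints it already pays $\ge x_n-x_1=1\ge\mu$; otherwise $i^*$ shares its facility with agent $1$ or with agent $n$, and the triangle inequality gives that facility cost at least $x_{i^*}-x_1$ or $x_n-x_{i^*}$, each $\ge\mu$. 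Hence the ratio is $\le n$ here. In Case 2 ($|X_1|>c_1$, $|X_2|\le c_2$) we have $x_{c_1+1}\le\tfrac12$ (agent $c_1+1\in X_1$), so $f_1=2x_{c_1+1}-x_n\le x_1$, $f_2=x_n$, and every agent lies within $1-x_{c_1+1}$ of its facility, giving $\mathrm{cost}(\mathrm{EEM})\le n(1-x_{c_1+1})$. For the lower bound, the $c_1+1$ agents $1,\dots,c_1+1$ (all at position $\le x_{c_1+1}$) cannot all fit in one facility since both capacities are $\le c_1$; therefore whichever facility serves agent $n$ also serves some agent $j\le c_1+1$ and pays $\ge x_n-x_j\ge 1-x_{c_1+1}$, so $\mathrm{OPT}\ge 1-x_{c_1+1}$ and again the ratio is $\le n$.

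The delicate case is Case 3 ($|X_1|\le c_1$, $|X_2|>c_2$), which forces $c_2<|X_2|\le|X_1|\le c_1$ and hence $R:=|X_2|\le n/2$; this is also where the constant $\tfrac32$ enters. Here $x_{n-c_2}>\tfrac12$, $f_1=x_1$, $f_2=2x_{n-c_2}-x_1\ge x_n$, and splitting the $n-c_2$ agents routed to $f_1$ into the $n-R$ agents of $X_1$ (each within $\tfrac12$ of $x_1$) and the $R-c_2$ agents of $X_2$ (each within $x_{n-c_2}$ of $x_1$), while the rightmost $c_2$ agents routed to $f_2$ each lie within $x_{n-c_2}$ of $f_2$, gives $\mathrm{cost}(\mathrm{EEM})\le \tfrac12(n-R)+R\,x_{n-c_2}\le\tfrac12(n-R)+R=\tfrac12(n+R)\le\tfrac34 n$. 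It then remains to prove $\mathrm{OPT}\ge\tfrac12$, and I would case on which optimal facility serves agent $1$: if it is the capacity-$c_1$ facility, a count on the $c_2+1$ agents at positions $\ge x_{n-c_2}$ forces it to serve one of them too, paying $\ge x_{n-c_2}-x_1>\tfrac12$; if it is the capacity-$c_2$ facility and that facility also serves an agent right of $\tfrac12$, it pays $>\tfrac12$ outright; and in the remaining case the capacity-$c_2$ facility serves only agents left of $\tfrac12$, so (as $|X_1|>c_2$) the capacity-$c_1$ facility serves some such agent together with agent $n$ and pays $\ge x_n-\tfrac12=\tfrac12$. Combining, $\mathrm{cost}(\mathrm{EEM})\le\tfrac34 n\le\tfrac34 n\cdot 2\,\mathrm{OPT}=\tfrac{3n}{2}\,\mathrm{OPT}$.

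I expect Case 3 to be the main obstacle: engineering the pigeonhole so that \emph{some} single optimal facility is forced to serve both a left-of-midpoint agent and a far-right agent (and hence pay $\ge\tfrac12$) requires juggling the inequalities $c_2<|X_2|\le|X_1|\le c_1$ carefully, whereas Cases 1 and 2 become routine once the right witness agents are identified. A secondary point to get right is the bookkeeping yielding the clean bound $\mathrm{cost}(\mathrm{EEM})\le\tfrac34 n$ in Case 3, which relies on $R\le n/2$.
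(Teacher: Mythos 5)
Your proof is correct and follows the same skeleton as the paper's: normalize to $[0,1]$, assume $|X_1|\ge|X_2|$, and in each of the three cases of EEM pair an explicit upper bound on the mechanism's cost with a lower bound on $\mathrm{OPT}$; the decisive quantities ($\mathrm{OPT}\ge 1-x_{c_1+1}$ against $\mathrm{cost}\le n(1-x_{c_1+1})$ in Case 2, and $\mathrm{cost}\le\frac{3n}{4}$ against $\mathrm{OPT}\ge\frac12$ in Case 3) coincide exactly with the paper's. Where you differ is in how the lower bounds are certified. The paper leans on its structural corollary that optimal solutions split the sorted agents into continuous blocks, from which it asserts (somewhat tersely) that either some $X_1$-agent is grouped with $x_n$ or some $X_2$-agent with $x_1$; for Case 1 it simply cites the known $n-2$ ratio of the classical endpoint mechanism. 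You instead run direct pigeonhole arguments on the capacities --- the $c_1+1$ leftmost agents cannot all share a facility in Case 2, the $c_2+1$ agents at positions $\ge x_{n-c_2}$ cannot all be absorbed by the capacity-$c_2$ facility, and the trichotomy on which facility serves agent $1$ in Case 3 --- which makes the lower bounds self-contained (no appeal to the continuity lemma or to external results), at the cost of a marginally weaker Case-1 constant ($n$ rather than $n-2$, immaterial for the stated $\frac{3n}{2}$ bound). Both routes are sound; your Case-3 argument is arguably the more airtight write-up of the step the paper glosses over.
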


\begin{proof}
Again, without loss of generality, assume $|X_{1}|\geq |X_{2}|$. 
We first note that the optimal solution has the following form: 
partition of the agents into left successive $c_{1}$ agents and right successive $c_{2}$ successive agents
and locate each facility at the median point of each subset of the agents.

For Case 1, EEM is exactly the same as the classical Endpoint
Mechanism \cite{PrTe09a}, 
thus has an approximation of $n-2$.

For Case 2, $OPT\geq x_{n}-x_{c_{1}+1}$.
This is because in the optimal solution, all the agents in $X_{1}$ cannot be served by a single facility as $|X_{1}| > c_{1} \geq c_{2}$.
Thus at least one of the agents in $X_{1}$ has to be grouped with $x_{n}$.
Moreover, 
\begin{align*}
ALG & \leq  c_{1}(x_{c_{1}}-f_{1}) + (n-c_{1})(f_{2}-x_{c_{1}+1}) \\
	& \leq  c_{1}(x_{c_{1}+1}-f_{1}) + (n-c_{1})(f_{2}-x_{c_{1}+1}) \\
	& =  n(f_{2}-x_{c_{1}+1}) \leq nOPT.
\end{align*}

For Case 3, $OPT\geq \frac{1}{2}(x_{n}-x_{1})$.
This is because in the optimal solution $X_{1}$ 
cannot be served by one facility and $X_{2}$ by another facility.
Thus either at least one of the agents in $X_{1}$ has to be grouped with $x_{n}$
or at least one of the agents in $X_{2}$ has to be grouped with $x_{1}$.
No matter which case happens, 
there is one group such that the longest distance between each pair of agents of this group is at least $\frac{1}{2}(x_{n}-x_{1})$.
Moreover, 
\begin{align*}
ALG & \leq  |X_{1}|(x_{|X_{1}|}-f_{1}) + (n-c_{2} - |X_{1}|)(x_{n-c_{2}}-f_{1}) \\
        & ~~~~~ + c_{2}(f_{2}-x_{n - c_2+1}) \\
	&  \leq |X_{1}|\cdot \frac{1}{2} (x_{n}-x_{1}) + (n-|X_{1}|)\cdot  (x_{n}-x_{1})  \\
	& \leq \frac{3n}{4}(x_{n}-x_{1}) \leq \frac{3n}{2} OPT.
\end{align*}
The second inequality is because $f_{2}-x_{n - c_2+1} \leq f_{2}-x_{n - c_2} = x_{n-c_{2}}-x_{1} \leq x_{n}-x_{1}$.
The third inequality is because the term in the second line is maximized when $|X_{1}| = \frac{n}{2}$ as $|X_{1}| \geq |X_{2}|$.
%
\end{proof}



\begin{theorem}\label{lem:eem:max}
For the objective of maximum cost, EEM has an approximation ratio of $4$.
\end{theorem}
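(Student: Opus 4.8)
The plan is to mirror the structure of the total-cost proof (Theorem~\ref{lem:eem:social}), analyzing the three cases of EEM separately and in each case bounding the maximum cost $ALG$ of the mechanism against a lower bound on $OPT$. Throughout I assume without loss of generality that $|X_1|\ge |X_2|$, the other case being symmetric. I will also rescale so that $x_1=0$ and $x_n=1$, which does not affect the approximation ratio; write $L=x_n-x_1$ for the span (so $L=1$ after rescaling, but I keep it symbolic to make the inequalities transparent). Recall from Corollary~\ref{corollary: continuous and median} that an optimal maximum-cost solution partitions the agents into a left block of size $c_1$ and a right block of size $c_2$ (or vice versa) and places each facility at the midpoint of its block; this is the ``canonical form'' I will compare against.

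First I would handle Case~1, where $|X_1|\le c_1$ and $|X_2|\le c_2$. Here EEM coincides with the classical endpoint mechanism, which is known to be a $2$-approximation for maximum cost~\cite{PrTe09a}; so this case contributes a factor of $2$, comfortably below $4$. For Case~2, where $|X_1|>c_1$ and $|X_2|\le c_2$, the facilities are placed at $f_1=2x_{c_1+1}-x_n$ and $f_2=x_n$, with agents $\{1,\dots,c_1\}$ on $f_1$ and the rest on $f_2$. The key lower bound, exactly as in the total-cost proof, is $OPT\ge x_n-x_{c_1+1}$: since $|X_1|>c_1\ge c_2$, the agents of $X_1$ cannot all fit on one facility, so at least one agent of $X_1$ (which lies in $[x_1,x_1+\tfrac12 L]$) must share a facility with $x_n$, forcing a cost of at least half the distance between them. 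More simply, $OPT\ge \tfrac12(x_n-x_{c_1+1})$ by an even cruder version of the same argument, but I expect the cleaner route is: the agent $c_1+1,\dots,n$ served by $f_2=x_n$ have cost at most $x_n-x_{c_1+1}$, and the agents $1,\dots,c_1$ served by $f_1=2x_{c_1+1}-x_n$ have cost $\max(|x_1-f_1|,|x_{c_1}-f_1|)$; since $f_1\le x_1$ and $x_{c_1}\le x_{c_1+1}$, the worst of these is $x_{c_1}-f_1\le x_{c_1+1}-f_1 = 2(x_n-x_{c_1+1})$. Hence $ALG\le 2(x_n-x_{c_1+1})\le 2\,OPT$, better than $4$.

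The main work, and the place where the factor $4$ actually arises, is Case~3, where $|X_1|\le c_1$ and $|X_2|>c_2$, with $f_1=x_1$, $f_2=2x_{n-c_2}-x_1$, agents $\{1,\dots,n-c_2\}$ on $f_1$ and agents $\{n-c_2+1,\dots,n\}$ on $f_2$. For the lower bound I would reuse the Case~3 argument from the total-cost proof: $X_1$ and $X_2$ cannot be served by one facility each (because $|X_1|+|X_2|>n\ge$ ... actually because $|X_2|>c_2$ and $|X_1|\le c_1$ forces some agent of $X_2$ onto the left facility or some agent of $X_1$ onto the right), so some optimal group has diameter at least $\tfrac12 L$, giving $OPT\ge \tfrac14 L$ (diameter $d$ forces cost $\ge d/2$). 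For the upper bound, the agents served by $f_1=x_1$ have cost at most $x_{n-c_2}-x_1$; the agents served by $f_2=2x_{n-c_2}-x_1$ are $x_{n-c_2+1},\dots,x_n$, at cost at most $\max(x_{n-c_2+1}-f_2,\ f_2-x_n)$, wait — $f_2=2x_{n-c_2}-x_1\ge x_{n-c_2}$ but could be left or right of $x_n$; the cost is at most $\max(|x_n-f_2|, |x_{n-c_2+1}-f_2|)$, and since $f_2-x_1 = 2(x_{n-c_2}-x_1)\le 2L$ one gets each such cost bounded by $L$. So $ALG\le L\le 4\,OPT$. The one subtlety I anticipate as the real obstacle is making the upper bound on the $f_2$-served agents' cost tight enough — in particular controlling $x_n-f_2$ when $f_2$ lands to the right of $x_n$, and confirming that the branch condition $|X_1|\le c_1$ together with $|X_1|\ge|X_2|$ genuinely forces $x_{n-c_2}\le x_1+\tfrac12 L$ (which is what keeps $f_2-x_1\le L$). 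Once that geometric fact is pinned down, collecting the three cases gives $\max(2,2,4)=4$, completing the proof; I would close by remarking, as the paper does for total cost, that this matches the uncapacitated lower bound of $2$ up to a small constant.
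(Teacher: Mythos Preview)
Your proposal follows essentially the same three-case structure as the paper, and the broad conclusions --- Case~1 gives a $2$-approximation (endpoint mechanism), Case~2 gives a $2$-approximation, Case~3 gives the $4$-approximation --- all match. However, there are two concrete slips you should correct.

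In Case~2 you conflate the total-cost and maximum-cost lower bounds: the total-cost proof uses $OPT\ge x_n-x_{c_1+1}$, but for maximum cost one only has $OPT\ge\tfrac12(x_n-x_{c_1+1})$ (the paper in fact shows equality). You also miscompute $x_{c_1+1}-f_1$: since $f_1=2x_{c_1+1}-x_n$, this equals $x_n-x_{c_1+1}$, not $2(x_n-x_{c_1+1})$. The two errors happen to cancel, so your conclusion $ALG\le 2\,OPT$ survives, but the individual steps as written are incorrect.

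In Case~3 your anticipated ``real obstacle'' rests on a false expectation. The condition $|X_2|>c_2$ forces agent $n-c_2$ into $X_2$, i.e.\ $x_{n-c_2}>x_1+\tfrac12 L$, the \emph{opposite} of what you planned to verify; consequently $f_2=2x_{n-c_2}-x_1>x_n$ always, and $f_2-x_1\le L$ is simply false here. The bound on the $f_2$-served agents' cost instead comes from
\[
f_2-x_{n-c_2+1}\;\le\;f_2-x_{n-c_2}\;=\;x_{n-c_2}-x_1\;\le\;L,
\]
which is precisely the inequality the paper uses. Replace your intended geometric fact with this one and the rest of your outline goes through.
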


\begin{proof}
For Case 1, EEM is exactly the same as the classical Endpoint Mechanism \cite{PrTe09a}, thus has an approximation of $2$.


For Case 2, let $dis=x_{n}-x_{c_{1}+1}$.
We claim that $OPT = \frac{1}{2}dis$.
We first note that $(f^{*}_{1},f^{*}_{2})=(\frac{1}{2}(x_{c_{1}}-x_{1}), \frac{1}{2}(x_{n}+x_{c_{1}+1}))$ 
is a feasible solution with the announced cost,
i.e. $(f^{*}_{1},f^{*}_{2})$ an optimal solution.
Let $(l_{1},l_{2})$ be any solution.
It is easy to see that to guarantee a cost that is not greater than $\frac{1}{2}dis$,
$\max\{l_{1}, l_{2}\} \geq \frac{1}{2}(x_{n}+x_{c_{1}+1})$
as agent $n$ has to be connected to one of them.
Moreover, since the capacity of the leftmost facility is at most $c_{1}$,
at least one of agents $\{1,\cdots,c_{1},c_{1}+1\}$ has to be grouped with agent $n$, 
thus the optimal solution is to put $c_{1}+1$ into $n$'s group and 
its serving facility is located at $\frac{1}{2}(x_{n}+x_{c_{1}+1})$.
We see that the cost of $(l_{1},l_{2})$ cannot be smaller than $\frac{1}{2}dis$.

Let $(f_{1}, f_{2})$ be the output of EEM.
Then $x_{c_{1}}-f_{1} \leq x_{c_{1}+1}-f_{1}=x_{n}-x_{c_{1}+1}=dis$.
Thus, $ALG\leq dis = 2OPT$, which is a 2-approximation.

Case 3 is similar to Theorem~\ref{lem:eem:social}. 
Since the optimal solution cannot serve $X_{1}$ 
by one facility and $X_{2}$ by another facility,
either one of the agents in $X_{1}$ has to be grouped with $x_{n}$
or one of the agents in $X_{2}$ has to be grouped with $x_{1}$.
Thus $OPT\geq \frac{1}{4}(x_{n}-x_{1})$.
Again, since $f_{2}-x_{n - c_2+1} \leq f_{2}-x_{n - c_2} = x_{n-c_{2}}-x_{1} \leq x_{n}-x_{1}$.
$ALG\leq 4\ OPT$, which is a 4-approximation.
\end{proof}

\section{Conclusion and Discussion}

%

We considered the FLP-CC both algorithmically 
and from a mechanism design perspective. 
See Table~\ref{table:results} for a summary of our results. 
Collectively, our results show that the addition of capacity
constraints to the FLP makes it more difficult 
compared to the uncapacitated case to solve optimally and to design
mechanisms with desirable properties (e.g.,  
strategyproofness and bounds on the approximation
ratio). 
There are many directions for future work.
The most important question is to understand whether any meaningful 
upper bound can be established for the case of three locations. 
Moreover, can we extend results beyond one dimension to trees, networks, or two-dimensional rectilinear and Euclidean metrics? 
Finally, it is interesting to consider if randomization can help to design 
more efficient and better algorithms or mechanisms for FLP-CC. 

\section*{Acknowledgements}
Aziz is supported by a UNSW Scientia Fellowship, and Defence Science and Technology (DST) under the project ``Auctioning for distributed multi vehicle planning'' (DST 9190).
 Lee is supported by a Data61 and a UNSW Scientia PhD fellowship.
Li is supported by the ERC grant number 639945 (ACCORD). 
Walsh is funded by the ERC under Horizon 2020 via
AMPLify 670077.

\bibliographystyle{aaai}
\bibliography{capFl_arxiv.bib}

\begin{thebibliography}{}

\bibitem[\protect\citeauthoryear{Aardal \bgroup et al\mbox.\egroup
  }{2015}]{aardal2015approximation}
Aardal, K.; van~den Berg, P.~L.; Gijswijt, D.; and Li, S.
\newblock 2015.
\newblock Approximation algorithms for hard capacitated k-facility location
  problems.
\newblock {\em European Journal of Operational Research} 242(2):358--368.

\bibitem[\protect\citeauthoryear{Aziz \bgroup et al\mbox.\egroup
  }{2019}]{aclp18}
Aziz, H.; Chan, H.; Lee, B.~E.; and Parkes, D.~C.
\newblock 2019.
\newblock The capacity constrained facility location problem.
\newblock In {\em Proceedings of the 15th Conference on Web and Internet
  Economics (WINE)},  To appear.

\bibitem[\protect\citeauthoryear{Brimberg \bgroup et al\mbox.\egroup
  }{2001}]{BKEM01}
Brimberg, J.; Korach, E.; Eben-Chaim, M.; and Mehrez, A.
\newblock 2001.
\newblock The capacitated $p$-facility location problem on the real line.
\newblock {\em International Transactions in Operations Research} 8:727--738.

\bibitem[\protect\citeauthoryear{Fotakis and Tzamos}{2013}]{ft2013}
Fotakis, D., and Tzamos, C.
\newblock 2013.
\newblock On the power of deterministic mechanisms for facility location games.
\newblock In Fomin, F.; Freivalds, R.; Kwiatkowska, M.; and Peleg, D., eds.,
  {\em Automata, Languages, and Programming},  449--460.

\bibitem[\protect\citeauthoryear{Garey and Johnson}{1979}]{garey}
Garey, M., and Johnson, D.
\newblock 1979.
\newblock {\em Computers and intractability : a guide to the theory of
  {NP}-completeness}.
\newblock W.H. Freeman.

\bibitem[\protect\citeauthoryear{Gibbard}{1973}]{gs1}
Gibbard, A.
\newblock 1973.
\newblock Manipulation of voting schemes: A general result.
\newblock {\em Econometrica} 41:587--601.

\bibitem[\protect\citeauthoryear{Golowich, Narasimhan, and
  Parkes}{2018}]{gnpijcai18}
Golowich, N.; Narasimhan, H.; and Parkes, D.
\newblock 2018.
\newblock Deep learning for multi-facility location mechanism design.
\newblock In {\em Proceedings of the Twenty-Seventh International Joint
  Conference on Artificial Intelligence, {IJCAI-18}},  261--267.

\bibitem[\protect\citeauthoryear{Heo}{2013}]{heo2013}
Heo, E.
\newblock 2013.
\newblock Strategy-proof rules for two public goods: double median rules.
\newblock {\em Social Choice and Welfare} 41(4):895--922.

\bibitem[\protect\citeauthoryear{Levi, Shmoys, and Swamy}{2012}]{levi2012lp}
Levi, R.; Shmoys, D.~B.; and Swamy, C.
\newblock 2012.
\newblock Lp-based approximation algorithms for capacitated facility location.
\newblock {\em Mathematical programming} 131(1-2):365--379.

\bibitem[\protect\citeauthoryear{Lu \bgroup et al\mbox.\egroup
  }{2010}]{lu2010asymptotically}
Lu, P.; Sun, X.; Wang, Y.; and Zhu, Z.
\newblock 2010.
\newblock Asymptotically optimal strategy-proof mechanisms for two-facility
  games.
\newblock In {\em Proceedings of the 11th ACM conference on Electronic
  commerce},  315--324.

\bibitem[\protect\citeauthoryear{Megiddo \bgroup et al\mbox.\egroup
  }{1981}]{MTZC81}
Megiddo, N.; Tamir, A.; Zemel, E.; and Chandrasekaran, R.
\newblock 1981.
\newblock An $o(n\, \log^2 \, n)$ algorithm for the $k$-th longest path in a
  tree, with applications to location problems.
\newblock {\em SIAM Journal on Computing} 10:328--337.

\bibitem[\protect\citeauthoryear{Megiddo, Zemel, and Hakimi}{1983}]{MZH83}
Megiddo, N.; Zemel, E.; and Hakimi, S.
\newblock 1983.
\newblock The maximum coverage location problem.
\newblock {\em SIAM Journal on Algebraic and Discrete Methods} 4:253--261.

\bibitem[\protect\citeauthoryear{Miyagawa}{2001}]{miyagawa2001}
Miyagawa, E.
\newblock 2001.
\newblock Locating libraries on a street.
\newblock {\em Social Choice and Welfare} 18(3):527--541.

\bibitem[\protect\citeauthoryear{Moulin}{1980}]{moulin1980}
Moulin, H.
\newblock 1980.
\newblock On strategy-proofness and single peakedness.
\newblock {\em Public Choice} 35(4):437--455.

\bibitem[\protect\citeauthoryear{Pal, Tardos, and
  Wexler}{2001}]{pal2001facility}
Pal, M.; Tardos, T.; and Wexler, T.
\newblock 2001.
\newblock Facility location with nonuniform hard capacities.
\newblock In {\em Proceedings 42nd IEEE Symposium on Foundations of Computer
  Science},  329--338.

\bibitem[\protect\citeauthoryear{Procaccia and Tennenholtz}{2009}]{PrTe09a}
Procaccia, A.~D., and Tennenholtz, M.
\newblock 2009.
\newblock Approximate mechanism design without money.
\newblock In {\em Proceedings 10th ACM Conference on Electronic Commerce},
  177--186.

\bibitem[\protect\citeauthoryear{Procaccia and
  Tennenholtz}{2013}]{ptacmtec2013}
Procaccia, A., and Tennenholtz, M.
\newblock 2013.
\newblock Approximate mechanism design without money.
\newblock {\em ACM Trans. Econ. Comput.} 1(4):18:1--18:26.

\bibitem[\protect\citeauthoryear{Satterthwaite}{1975}]{gs2}
Satterthwaite, M.
\newblock 1975.
\newblock Strategy-proofness and {Arrow's} conditions: Existence and
  correspondence theorems for voting procedures and social welfare functions.
\newblock {\em Journal of Economic Theory} 10:187--216.

\bibitem[\protect\citeauthoryear{Sui, Boutilier, and
  Sandholm}{2013}]{sui2013analysis}
Sui, X.; Boutilier, C.; and Sandholm, T.
\newblock 2013.
\newblock Analysis and optimization of multi-dimensional percentile mechanisms.
\newblock In {\em Proceedings of the Twenty-Third International Joint
  Conference on Artificial Intelligence}, IJCAI '13,  367--374.

\end{thebibliography}

\end{document}